\documentclass[11pt,a4paper]{article}

\usepackage{amsmath,amssymb,mathtools}
\usepackage{amsthm}
\usepackage{booktabs}
\usepackage{array}
\usepackage[a4paper,margin=1in]{geometry}
\usepackage[numbers,sort&compress]{natbib}
\usepackage{microtype}
\usepackage{xcolor}
\usepackage{tikz}
\usepackage{graphicx}
\usepackage[hidelinks]{hyperref}

\graphicspath{{SIDMA_review_corrected_20260911/}}
\newtheorem{theorem}{Theorem}[section]
\newtheorem{proposition}{Proposition}[section]
\newtheorem{lemma}{Lemma}[section]
\newtheorem{corollary}{Corollary}[section]
\theoremstyle{definition}
\newtheorem{definition}{Definition}[section]
\theoremstyle{remark}
\newtheorem{remark}{Remark}[section]

\newenvironment{claim}{\par\medskip\noindent\textit{Claim.}\ }{\par\medskip}
\newcommand{\F}{\mathbb{F}}
\DeclareMathOperator{\supp}{supp}
\newcommand{\FullHostFigureWidth}{0.68\textwidth}

\title{A Four-Connected Graph without a Legal System}
\author{Qiuyu Chen\\[0.4em]
\small Department of Computer Science, Shanghai Jiao Tong University}
\date{}

\begin{document}

\maketitle

\def\JNWCONTENTONLY{1}
\ifdefined\JNWCONTENTONLY
% The standalone arXiv article supplies its own preamble and title block.
\else
\documentclass[review]{siamart251216}

\usepackage{amssymb}
\usepackage{xcolor}
\usepackage{tikz}
\usepackage{graphicx}

% SIAM's class loads hyperref; keep links unobtrusive for review.
\hypersetup{colorlinks=true,linkcolor=black,citecolor=black,urlcolor=black}
\emergencystretch=1em

\newcommand{\F}{\mathbb{F}}
\DeclareMathOperator{\supp}{supp}

% The SIAM class provides theorem, lemma, proposition, corollary, and
% definition environments. Remarks are declared through SIAM's remark hook.
\newsiamremark{remark}{Remark}
\newenvironment{claim}{\par\medskip\noindent\textit{Claim.}\ }{\par\medskip}

\title{A Four-Connected Graph without a Legal System}
\author{Qiuyu Chen\thanks{Department of Computer Science, Shanghai Jiao Tong University
(\email{canghaimeng@sjtu.edu.cn}).}}
\headers{A Four-Connected Graph without a Legal System}{Q. Chen}

\begin{document}
\maketitle
\fi

\begin{abstract}
In a 2021 paper, Jankiewicz, Norin, and Wise asked whether there exists a finite $4$-connected graph of girth at least four and nonnegative Charney--Davis curvature such that no $4$-connected ordinary subgraph admits a legal system. We construct such a graph by starting from the hexagonal prism and attaching three $K_{3,4}$-based caps along pairwise disjoint induced $4$-cycles. The key structural input is a restriction theorem showing that a legal system on an induced-$4$-cycle amalgam restricts to each side, so the obstruction carried by the negatively curved prism survives the attachments. The resulting $33$-vertex graph is $4$-regular and $4$-connected, has girth four and Charney--Davis curvature one, and, by $4$-regularity, is its own unique $4$-connected ordinary subgraph.
\end{abstract}

\ifdefined\JNWCONTENTONLY
\else
\begin{keywords}
right-angled Coxeter groups, legal systems, Charney--Davis curvature, vertex connectivity, graph amalgams
\end{keywords}

\begin{MSCcodes}
20F65, 20F55, 05C40
\end{MSCcodes}
\fi

\section{Introduction}

In 2021, Jankiewicz, Norin, and Wise introduced legal systems of vertex-moves as a combinatorial device in the study of virtual algebraic fibrations of right-angled Coxeter groups~\cite{JNW21}; their construction is based on Bestvina--Brady Morse theory~\cite{BB97}. In that framework a legal system on a finite graph $G$ is a choice, at each vertex, of a subset containing the vertex and none of its neighbours, together with a starting subset, such that every set in the resulting affine orbit over $\F_2$ induces a nonempty connected subgraph and so does its complement~\cite[Section~2]{JNW21}. The same paper records a $2$-curvature invariant $\kappa_2(G):=1-|V(G)|/2+|E(G)|/4$, coinciding with the Charney--Davis curvature on triangle-free graphs, and proves that a legal system forces $\kappa_2(G)\ge 0$~\cite[Definition~3.1, Theorem~6.14]{JNW21}. They then ask whether there exists a finite $4$-connected graph $\Gamma$ of girth at least $4$ with $\kappa(\Gamma)\ge 0$ such that no $4$-connected ordinary subgraph admits a legal system. This is the $4$-connected question in~\cite[Problem~5.2]{JNW21}. Throughout, $4$-connected means that the graph has at least five vertices and remains connected after deletion of any set of at most three vertices, and an ordinary subgraph is not required to be induced; these are the conventions of~\cite[Definition~6.3]{JNW21}.

The significance of the Jankiewicz--Norin--Wise state-and-move criterion is that it turns finite combinatorial information into a certificate for virtual algebraic fibering. Subsequent work has used or generalized this mechanism for random right-angled Coxeter groups, higher finiteness properties of fibering kernels, and high-dimensional hyperbolic fibering constructions~\cite{FPGK18,SZ23,IMM24,LMSW25}. Thus legal systems form a reusable bridge between graph structure and geometric-group-theoretic fibering, making it natural to ask how strong the graph-theoretic obstruction to such systems can remain under favorable global conditions such as high connectivity and nonnegative curvature.

We answer this obstruction question by combining an obstruction-preserving graph operation with a regularity and connectivity construction. The structural input is an induced-$4$-cycle amalgam restriction: if an amalgam admits a legal system, then each side does. We start from the hexagonal prism, whose negative $2$-curvature rules out a legal system, and attach three seven-vertex caps with $K_{3,4}$ interiors along pairwise disjoint induced $4$-cycles. The restriction theorem preserves the obstruction through these attachments, while the resulting $4$-regular host is engineered to be $4$-connected with Charney--Davis curvature one. Its $4$-regularity also forces every $4$-connected ordinary subgraph to be the whole host. The complete $33$-vertex witness is shown in Figure~\ref{fig:full-host}, and its exact combinatorial construction is given in Definition~\ref{defn:G}.

\begin{figure}[!htbp]
\centering
\providecommand{\FullHostFigureWidth}{0.82\textwidth}
\includegraphics[width=\FullHostFigureWidth,trim=10bp 15bp 18bp 8bp,clip]{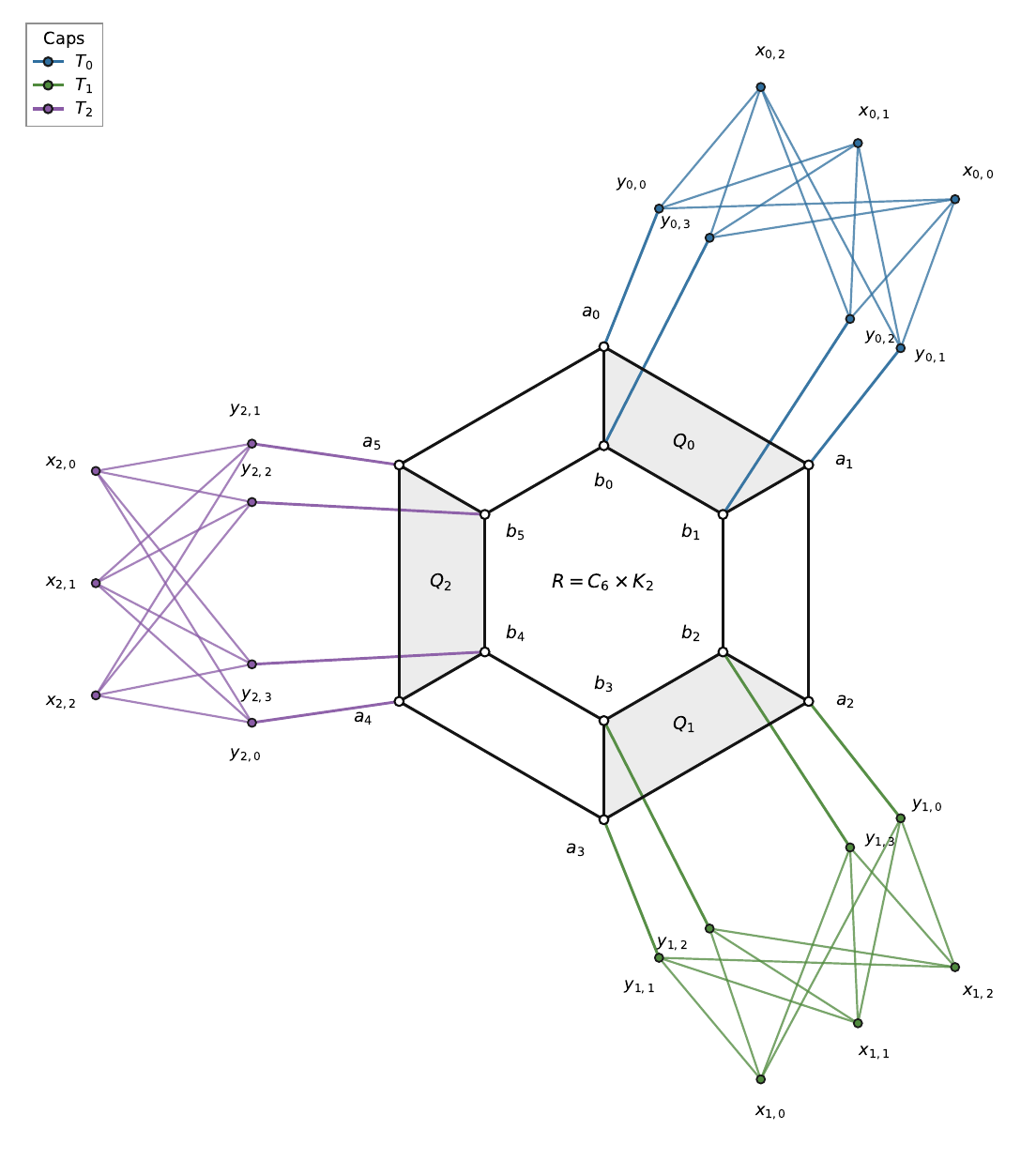}
\caption{The $33$-vertex witness $G$. Its central subgraph $R$ is the hexagonal prism. For each $t\in\{0,1,2\}$, the interior of the cap $T_t$ and its four matching edges are colored, while the shared interface cycle $Q_t$ is black. The drawing contains all $33$ vertices and all $66$ edges.}
\label{fig:full-host}
\end{figure}

\begin{theorem}[A four-connected graph without a legal system]
\label{thm:main}
There exists a finite simple undirected graph $G$ on $33$ vertices and $66$ edges such that
\begin{enumerate}
\item $G$ is $4$-regular, triangle-free, of girth four, and four-vertex-connected;
\item $\kappa_2(G)=\kappa(G)=1$;
\item no four-vertex-connected ordinary subgraph of $G$ admits a legal system.
\end{enumerate}
\end{theorem}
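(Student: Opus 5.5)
The proof has four parts: build $G$ explicitly, verify the elementary graph invariants, compute the curvature, and then show that no legal system exists by restricting to the prism. The restriction step carries all the content.

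\textbf{Construction and counts.} Take $R$ to be the hexagonal prism $C_6\square K_2$: $12$ vertices, $18$ edges, $3$-regular. Choose three pairwise disjoint induced $4$-cycles $Q_0,Q_1,Q_2$ in $R$. Each is a ``square face'' $\{a_i,a_{i+1},b_{i+1},b_i\}$ for $i=0,2,4$, and together they cover all $12$ vertices. Each cap $T_t$ adds seven vertices forming a $K_{3,4}$, which contributes $12$ edges, plus four matching edges joining the four degree-$3$ side vertices to the four vertices of $Q_t$. The vertex count is $12+3\cdot 7=33$. The edge count is $18+3\cdot(12+4)=66$.

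\textbf{Regularity, girth, curvature.} Every prism vertex has degree $3$ and lies in exactly one $Q_t$, so its matching edge raises its degree to $4$. In each $K_{3,4}$, the three-side vertices have degree $4$, and the four-side vertices have degree $3+1=4$. Hence $G$ is $4$-regular. The graph is triangle-free: $R$ and $K_{3,4}$ are bipartite, and a matching edge cannot lie in a triangle because its endpoints have no common neighbour. Girth four is witnessed by any $Q_t$. Since $G$ is triangle-free,
$$\kappa=\kappa_2=1-\tfrac{33}{2}+\tfrac{66}{4}=1.$$

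\textbf{Four-connectivity.} I would use Menger's theorem, or equivalently argue that any separator $S$ with $|S|\le 3$ leaves a connected graph. Each cap interior stays connected after deleting at most three vertices unless all three vertices of one side of the $K_{3,4}$ are removed. In that case the four remaining vertices are isolated within the cap, but each still has its own matching edge into $R$. The prism minus up to three vertices can be disconnected, but the caps supply extra paths: each cap links all four vertices of its $Q_t$ through a $4$-connected-like gadget. A case analysis on how many vertices of $S$ fall in $R$ and how many in each cap closes the argument. Alternatively, I would exhibit four internally disjoint paths between representative pairs and use symmetry; $G$ has a dihedral-type symmetry permuting the three caps. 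Since $G$ is $4$-regular, any $4$-connected ordinary subgraph $H$ has minimum degree at least $4$. Because every vertex of $G$ has degree exactly $4$, a nonempty $H$ with minimum degree at least $4$ must contain, with each of its vertices, all four incident edges. Connectivity of $G$ then forces $H=G$. So part~3 reduces to showing that $G$ admits no legal system.

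\textbf{No legal system (main step).} I would apply the restriction theorem stated later in the paper: if an amalgam of two graphs along an induced $4$-cycle admits a legal system, then so does each side. Three successive applications peel off $T_2$, $T_1$ and $T_0$. This requires checking that $G$ decomposes at each stage as an amalgam along $Q_t$ in the required sense, i.e.\ with $Q_t$ induced on both sides. The cap side is $T_t\cup Q_t$; there $Q_t$ has no chords, since the cap's four-side vertices are independent and $Q_t$ itself is induced in $R$. The result is a legal system on $R$. But $\kappa_2(R)=1-6+\tfrac{18}{4}=-\tfrac12<0$, which contradicts \cite[Theorem~6.14]{JNW21}. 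The main obstacle is the restriction theorem itself. To prove it, I would take the moves $m_v$ and starting set $S_0$ on the amalgam and intersect them with one side, say $A$. The no-neighbour condition is inherited directly. For connectivity, I would argue that an orbit set, or its complement, which is connected in $A\cup_Q B$ and meets $A$, remains connected on $A$. Any path leaving $A$ exits and re-enters through the $4$-cycle $Q$, and within $Q$ the two endpoints can be rejoined; this uses that the trace of a legal state on $Q$ is constrained, for instance it is never two antipodal vertices alone. Handling that antipodal case is where I expect most of the work to lie.
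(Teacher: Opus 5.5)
Up to the restriction step, your outline is the paper's proof: the same prism--cap graph, the same counts and curvature, the same degree argument that a four-connected ordinary subgraph of a connected $4$-regular graph is the whole graph, and the same three successive induced-$4$-cycle restrictions down to $R$ with $\kappa_2(R)=-\tfrac12$. If you simply cite the restriction theorem, the reduction is fine. The gap is in your sketch of that theorem, which you rightly say carries the content.

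\textbf{The restriction sketch fails as stated.} The claim that the trace of a legal state on $Q$ is never an antipodal pair is false, and so is the plan of intersecting the given moves and start set with one side. Take the octahedron, viewed as the amalgam of the cones on $C=(a,b,c,d)$ from two non-adjacent apexes $x$ and $v$. Use the moves $m_x=m_v=\{x,v\}$ and $m_a=\{a\}$, $m_b=\{b\}$, $m_c=\{c\}$, $m_d=\{d\}$. The coset of $\{v\}$ consists of one apex together with an arbitrary subset of $V(C)$, and every such state is legal. This coset contains $\{v,a,c\}$, whose trace on $C$ is $\{a,c\}$ and whose restriction to the cone on $x$ is disconnected. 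It also contains $\{v\}$, whose restriction to that cone is empty. So empty traces also occur, and your hypothesis that the state ``meets $A$'' assumes them away rather than excluding them. There is also a genuinely exceptional configuration in which the restricted family is illegal, and the side must be given its own system, as in the paper's cone claim.

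\textbf{The missing idea.} Use closure of the coset under moves, not legality of individual states. First pass to a complement-closed certificate, i.e.\ one with $1\in\mathrm{Dir}$. If a state $S$ has $S\cap V(C)=\{a,c\}$, then one of $m_a$, $m_c$, $m_a+m_c$ is a vector $m\in\mathrm{Dir}$ with $m\cap V(C)=\{a,c\}$. Then $S+m$ is a legal state avoiding $C$, so by the no-cross-edge condition it lies inside one interior. Empty and full traces give such a state directly, or after complementation. Translate that state by the move at a vertex $x$ of the other interior. This isolates $x$, so the translate must be $\{x\}$. If $x$ has two non-neighbours $w,z$, the same isolation argument forces $m_w=\{x,w\}$ and $m_z=\{x,z\}$, giving the illegal state $\{x,w,z\}$. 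In your three amalgams both interiors have at least two vertices (the cap interior has $7$), so the octahedral exception never arises. Hence every trace on $C$, and every complementary trace, is nonempty and connected. Your rerouting argument then does show that the restricted coset is legal.

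\textbf{Four-connectivity.} Saying ``a case analysis closes the argument'' leaves open the only delicate case: three deletions inside $R$, which can disconnect the prism. In that case every cap interior survives, so the survivors of each $Q_t$ (there is at least one) form a single block. The three blocks are joined cyclically by the vertex-disjoint edge pairs $\{a_1a_2,b_1b_2\}$, $\{a_3a_4,b_3b_4\}$, $\{a_5a_0,b_5b_0\}$. These pairs have disjoint endpoint sets and each needs two deletions to break, so three deletions can destroy at most one of them.
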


The proof is organized around two structural components. Section~\ref{sec:prelim} records the legal-system formalism, the curvature obstruction, and the affine characterization used later. Section~\ref{sec:amalgam} proves the induced-$4$-cycle restriction theorem, which preserves the absence of legal systems under the cap attachments. Section~\ref{sec:host} constructs the prism--cap graph and verifies the regularity, curvature, interface, and connectivity properties required of the host. Finally, Section~\ref{sec:proof} combines these ingredients with the elementary collapse of $4$-connected ordinary subgraphs in a connected $4$-regular graph to prove Theorem~\ref{thm:main}.

A Lean formalization of the main results has been carried out and is publicly available at \url{https://github.com/canghaimeng/JNWLegalSystemObstruction}.

\section{Preliminaries}
\label{sec:prelim}

The goal of this section is to fix the standing language of~\cite{JNW21} and to record the curvature obstruction together with the affine form of a legal system. Throughout, graphs are finite, simple, and undirected. Write $V(H)$ and $E(H)$ for the vertex-set and edge-set of a graph $H$, and write $H-U$ for the graph obtained by deleting a vertex set $U$ and all incident edges. A one-vertex graph is connected; the empty graph is not connected. The open neighbourhood of a vertex $v$ is written $N(v)$, and $N[v]:=N(v)\cup\{v\}$. Identify the power set of a finite set $V$ with the $\F_2$-vector space $\F_2^V$ via characteristic vectors, so that symmetric difference is vector addition, the empty set is the zero vector $0$, and $V$ itself is the all-ones vector $1$.

\begin{definition}[Ordinary subgraphs and four-vertex-connectivity]
\label{defn:ordinary}
Let $G$ be a graph. An \emph{ordinary subgraph} of $G$ is a graph $H$ with $V(H)\subseteq V(G)$ and $E(H)\subseteq\{e\in E(G):\text{both ends of $e$ lie in $V(H)$}\}$; the subgraph need not be induced. The graph $H$ is \emph{four-vertex-connected} in the sense of~\cite[Definition~6.3]{JNW21} with $k=4$ if $|V(H)|\ge 5$ (equivalently $|V(H)|>4$) and $H-U$ is connected whenever $U\subseteq V(H)$ and $|U|\le 3$.
\end{definition}

\begin{definition}[Legal systems]
\label{defn:legal}
Let $H$ be a graph. A \emph{JNW move} at a vertex $v$ of $H$ is a subset $m\subseteq V(H)$ such that $v\in m$ and $m$ contains no neighbour of $v$. Equivalently, writing $U_v(H)$ for the set of all such moves, one has
\[
U_v(H)=\bigl\{w\in\F_2^{V(H)}:w_v=1\text{ and }w_u=0\text{ for every }u\in N(v)\bigr\}.
\]
A state $T\subseteq V(H)$ is \emph{legal} if the induced subgraphs $H[T]$ and $H[V(H)\setminus T]$ are both nonempty and connected. Write $L_H$ for the set of legal states. A \emph{legal system} for $H$ is a choice of a move $m_v\in U_v(H)$ at each vertex $v$ together with a state $S$ such that every vector of the coset $S+\operatorname{span}_{\F_2}\{m_v:v\in V(H)\}$ is a legal state~\cite[Section~2]{JNW21}.
\end{definition}

\begin{definition}[Two-curvature]
\label{defn:kappa}
For a graph $H$ with $n=|V(H)|$ and $m=|E(H)|$, write $\kappa_2(H):=1-n/2+m/4$. This is the $r=2$ case of the clique-count curvature of~\cite[Definition~3.1]{JNW21}: writing $K_i$ for the set of $i$-cliques, with $|K_{-1}|=1$,
\[
\kappa_r(H)=\sum_{i=-1}^{r}(-2)^{-i-1}|K_i|,\qquad \kappa(H)=\kappa_\infty(H).
\]
In particular $\kappa_2(H)=1-|V(H)|/2+|E(H)|/4$. If $H$ is triangle-free then $K_i$ is empty for all $i\ge 3$, so $\kappa(H)=\kappa_2(H)$. This is the Charney--Davis curvature appearing in the right-angled setting; see~\cite{CD95,JNW21}.
\end{definition}

\begin{lemma}[Four-regular curvature]
\label{lem:kappa}
Let $G$ be a finite $4$-regular simple undirected graph, and write $n=|V(G)|$. Then $|E(G)|=2n$ and $\kappa_2(G)=1$.
\end{lemma}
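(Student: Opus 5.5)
The plan is to apply the handshake lemma and then substitute into Definition~\ref{defn:kappa}. First I compute the edge count: summing vertex degrees over $V(G)$ counts every edge exactly twice, because the graph is finite, simple and undirected, so each edge has two distinct ends and there are no loops or multiple edges. This gives $\sum_{v\in V(G)}\deg(v)=2|E(G)|$. Under $4$-regularity the left side is $4n$, hence $|E(G)|=2n$.

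Next I insert this into the formula $\kappa_2(H)=1-|V(H)|/2+|E(H)|/4$ from Definition~\ref{defn:kappa}. This yields
\[
\kappa_2(G)=1-\frac{n}{2}+\frac{2n}{4}=1 .
\]
No triangle-freeness is required here, since the lemma only asserts a value for $\kappa_2$ and not for $\kappa$. The identification $\kappa=\kappa_2$ is a separate step, carried out later for the triangle-free host.

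I do not see any real obstacle. The one point that needs care is that the handshake count depends on simplicity: a loop would add $2$ to the degree of its single vertex, and the formula would still hold. Even so, the hypotheses give exactly the standard setting, so the computation is immediate.
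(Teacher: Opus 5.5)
Your proof is correct and is the same argument as the paper's: the handshake lemma gives $2|E(G)|=4n$, and substituting $|E(G)|=2n$ into Definition~\ref{defn:kappa} gives $\kappa_2(G)=1$. Your closing aside contradicts itself, since the handshake identity does not depend on simplicity when a loop counts twice toward degree, but this does not affect the argument.
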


\begin{proof}
The sum of vertex degrees equals $4n$. By the handshaking lemma this sum equals $2|E(G)|$, so $|E(G)|=2n$. Substituting into Definition~\ref{defn:kappa} gives $\kappa_2(G)=1-n/2+(2n)/4=1$.
\end{proof}

\begin{theorem}[Legal systems force nonnegative $2$-curvature]
\label{thm:curvature}
Let $G$ be a finite simple undirected graph. If $G$ admits a legal system, then $\kappa_2(G)\ge 0$. If $\kappa_2(G)=0$, then for every state $T$ in the corresponding legal coset the induced subgraph $G[T]$ is a tree.
\end{theorem}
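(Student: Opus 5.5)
The plan is a double-counting argument over the legal coset: average the inequality ``a connected graph on $k$ vertices has at least $k-1$ edges'' over all states in the coset. Fix a legal system $(S,\{m_v\})$ for $G$. Write $M=\operatorname{span}_{\F_2}\{m_v:v\in V(G)\}$ and $C=S+M$, so every $T\in C$ is a legal state. Put $n=|V(G)|$ and $m=|E(G)|$.

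\emph{Step 1: equidistribution.} For a set $X\subseteq V(G)$, let $\pi_X:\F_2^{V(G)}\to\F_2^X$ be coordinate projection. Its restriction to $C$ is an affine map with linear part $\pi_X|_M$, so each point of the image of $C$ has a fibre of size $|\ker(\pi_X|_M)|$. Hence $\pi_X(C)$ is uniformly covered.
\begin{itemize}
\item For $X=\{v\}$: $m_v$ has $v$-coordinate $1$, so $\pi_{\{v\}}|_M$ is onto. Thus exactly half of the states $T\in C$ contain $v$.
\item For an edge $uv$, take $X=\{u,v\}$: $m_u$ contains $u$ but not its neighbour $v$, so $\pi_X(m_u)=(1,0)$. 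Symmetrically $\pi_X(m_v)=(0,1)$. So $\pi_X|_M$ is onto $\F_2^2$, and exactly a quarter of the states $T\in C$ contain both $u$ and $v$.
\end{itemize}

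\emph{Step 2: averaging.} Sum over $T\in C$. By Step 1,
\[
\sum_{T\in C}|T|=\tfrac{n}{2}|C|,\qquad \sum_{T\in C}|E(G[T])|=\tfrac{m}{4}|C|.
\]
Each $G[T]$ is nonempty and connected, so $|E(G[T])|\ge |T|-1$. Summing this gives $\tfrac m4|C|\ge(\tfrac n2-1)|C|$, which rearranges to $\kappa_2(G)=1-\tfrac n2+\tfrac m4\ge 0$. Only the legality of $T$ itself is used; the complement condition is not needed.

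\emph{Step 3: equality case.} If $\kappa_2(G)=0$, the summed inequality is an equality. Since every term satisfies $|E(G[T])|-(|T|-1)\ge 0$, each term must vanish. So every $G[T]$ with $T\in C$ is a connected graph on $|T|$ vertices with $|T|-1$ edges, that is, a tree.

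The only point needing care is the equidistribution in Step 1. It rests on the fibre-size observation for affine maps and on the non-adjacency condition in the definition of a JNW move. The latter is what makes $\pi_{\{u,v\}}(m_u)$ and $\pi_{\{u,v\}}(m_v)$ independent, and it is the source of the $m/4$ term.
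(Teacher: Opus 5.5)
Your proof is correct and takes essentially the same approach as the paper: you show each vertex lies in exactly half, and each edge in exactly a quarter, of the states in the coset, then sum $|E(G[T])|\ge |T|-1$ over the coset and read off the tree condition in the equality case. The paper derives the equidistribution slightly differently, via the free action of the span of the moves at a clique (yielding an identity matrix on the clique's coordinates), rather than via fibre sizes of the affine projection, but the two arguments are equivalent.
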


\begin{proof}
The inequality $\kappa_2\ge 0$ in the presence of a legal system is~\cite[Theorem~6.14]{JNW21}. The argument below is the averaging proof of that theorem, specialised at the end to the hexagonal prism.

Write $M$ for the $\F_2$-span of the chosen moves $\{m_v:v\in V(G)\}$, and write $O$ for the coset $S+M$. By hypothesis every $T\in O$ is a legal state. Let $N:=|O|$. Then $N$ is a positive power of $2$, and $N\ge 2$ because each generator $m_v$ is nonzero.

Let $K=\{x_1,\ldots,x_d\}$ be a clique of $G$ on $d\ge 1$ vertices, and let $m_i$ be the chosen move at $x_i$. Because $K$ is a clique, $m_i$ contains $x_i$ and contains none of $x_1,\ldots,x_d$ except $x_i$. Restricting the characteristic vectors of $m_1,\ldots,m_d$ to the $d$ coordinates indexed by $K$ therefore yields the $d\times d$ identity matrix over $\F_2$. In particular these $d$ vectors are linearly independent, the subgroup $H:=\operatorname{span}\{m_1,\ldots,m_d\}$ of $M$ has order $2^d$, and the translation action of $H$ on $O$ is free. The orbit $O$ is thereby partitioned into $N/2^d$ classes of size $2^d$, each a coset of $H$. Within a single $H$-class, the $d$ coordinates indexed by $K$ take all $2^d$ possible values exactly once, so $K$ is contained in exactly $N/2^d$ states of $O$.

A single vertex is a clique of size $1$, so each vertex of $G$ lies in exactly $N/2$ states of $O$. An edge is a clique of size $2$, so the two endpoints of each edge lie together in exactly $N/4$ states of $O$. Summing therefore yields
\[
\sum_{T\in O}|T|=n\cdot\frac{N}{2},\qquad\sum_{T\in O}|E(G[T])|=m\cdot\frac{N}{4}.
\]

Let $T$ be a state in $O$. Then $G[T]$ is nonempty and connected. A connected graph on $k\ge 1$ vertices has at least $k-1$ edges, with equality if and only if the graph is a tree: the case $k=1$ is empty of edges, while for $k\ge 2$ a spanning tree exists by iteratively adding an edge from the current vertex-set to a new vertex, a tree on $k$ vertices has exactly $k-1$ edges, and every remaining edge is extra. Taking $k=|T|$ therefore yields $|E(G[T])|\ge |T|-1$, or equivalently $1-|T|+|E(G[T])|\ge 0$, with equality if and only if $G[T]$ is a tree.

Summing over the $N$ states of $O$ gives
\[
\begin{aligned}
0 &\le \sum_{T\in O}\bigl(1-|T|+|E(G[T])|\bigr) \\
  &= N-\sum_{T\in O}|T|+\sum_{T\in O}|E(G[T])| \\
  &= N\bigl(1-n/2+m/4\bigr)=N\kappa_2(G).
\end{aligned}
\]
Since $N>0$ one obtains $\kappa_2(G)\ge 0$. If $\kappa_2(G)=0$ then each summand vanishes, so $G[T]$ is a tree for every $T\in O$.\end{proof}

\begin{corollary}[Negative $2$-curvature obstruction]
\label{cor:negative-curvature}
If $\kappa_2(G)<0$, then $G$ admits no legal system.
\end{corollary}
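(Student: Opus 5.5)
This is the contrapositive of the first assertion of Theorem~\ref{thm:curvature}, so the plan is a one-line deduction. Suppose $\kappa_2(G)<0$, and suppose, for contradiction, that $G$ admits a legal system, that is, moves $m_v\in U_v(G)$ and a state $S$ such that every vector of $S+\operatorname{span}_{\F_2}\{m_v\}$ is a legal state. Theorem~\ref{thm:curvature} then gives $\kappa_2(G)\ge 0$, which contradicts $\kappa_2(G)<0$. Hence no legal system exists.

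The only thing to check is that the hypotheses of Theorem~\ref{thm:curvature} hold. That theorem assumes only that $G$ is a finite simple undirected graph, and this is the paper's standing convention from Section~\ref{sec:prelim}. The averaging argument in its proof also needs the coset $O$ to be nonempty, so that $N>0$ can be cancelled. This is automatic: $O$ contains $S$, and in fact $N\ge 2$ because each $m_v$ is nonzero.

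No step is a real obstacle. The corollary is recorded separately because it will later be applied to the hexagonal prism $R$. There $n=12$ and $m=18$, so $\kappa_2(R)=1-6+9/2=-\tfrac12<0$, and $R$ therefore has no legal system. The amalgam restriction theorem is what transports this obstruction to the host graph.
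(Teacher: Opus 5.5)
Your proposal is correct and is exactly the paper's argument: the corollary is the contrapositive of the first assertion of Theorem~\ref{thm:curvature}. Your side remarks are accurate but not needed for the deduction itself; they concern the nonemptiness of the coset and the prism value $\kappa_2(R)=-\tfrac12$.
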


\begin{proof}
This is the contrapositive of the first assertion of Theorem~\ref{thm:curvature}.
\end{proof}

For later use, let $R=C_6\times K_2$ denote the Cartesian product of the $6$-cycle and $K_2$, i.e., the hexagonal prism. It has $12$ vertices and $18$ edges, and hence
\[
\kappa_2(R)=1-\frac{12}{2}+\frac{18}{4}=-\frac12<0.
\]
Thus $R$ admits no legal system by Corollary~\ref{cor:negative-curvature}.

\begin{theorem}[Affine characterisation of legal systems]
\label{thm:affine}
Let $G$ be a finite simple undirected graph with vertex-set $V$, and for each $v\in V$ let
\[
U_v=\{w\in\F_2^V:w_v=1\text{ and }w_u=0\text{ for every }u\in N(v)\}.
\]
Then $G$ admits a legal system if and only if there exist $s\in\F_2^V$ and a linear subspace $\mathrm{Dir}\le \F_2^V$ such that
\[
s+\mathrm{Dir}\subseteq L_G
\qquad\text{and}\qquad
\mathrm{Dir}\cap U_v\neq\emptyset\quad\text{for every }v\in V.
\]
\end{theorem}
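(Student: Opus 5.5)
The plan is to prove both directions directly from Definition~\ref{defn:legal}, translating between a legal system $(S,\{m_v\})$ and a pair $(s,\mathrm{Dir})$. Throughout I use the observation, immediate from the displayed description of $U_v$, that a vector $w\in\F_2^V$ is a JNW move at $v$ exactly when $w\in U_v$. Thus $U_v=U_v(G)$ as sets of vectors.

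For the forward direction, suppose a legal system is given by moves $m_v\in U_v(G)$ and a state $S$. I will set $s:=S$ and $\mathrm{Dir}:=\operatorname{span}_{\F_2}\{m_v:v\in V\}$. By definition of a legal system, every vector of $s+\mathrm{Dir}$ is a legal state, so $s+\mathrm{Dir}\subseteq L_G$. Moreover $m_v\in\mathrm{Dir}\cap U_v$ for each $v$, so each intersection $\mathrm{Dir}\cap U_v$ is nonempty.

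For the reverse direction, suppose $s$ and $\mathrm{Dir}$ satisfy the two conditions. For each $v$ I will choose some $m_v\in\mathrm{Dir}\cap U_v$; this $m_v$ is a JNW move at $v$, and I set $S:=s$. Let $M:=\operatorname{span}\{m_v\}$. Since each $m_v$ lies in the subspace $\mathrm{Dir}$, we have $M\le\mathrm{Dir}$. Hence
\[
S+M\subseteq s+\mathrm{Dir}\subseteq L_G,
\]
so every vector of the coset $S+M$ is a legal state. Therefore $(S,\{m_v\})$ is a legal system.

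I expect no real obstacle here. The only points needing care are the identification of subsets with characteristic vectors and the remark that $\mathrm{Dir}$ may be strictly larger than the span of the chosen moves. The inclusion $M\le\mathrm{Dir}$ is exactly what makes the reverse direction go through without needing equality.
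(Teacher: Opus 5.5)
Your proposal is correct and follows the paper's proof essentially line for line. In the forward direction you take $\mathrm{Dir}$ to be the span of the moves. In the reverse direction you choose $m_v\in\mathrm{Dir}\cap U_v$ and use $M\le\mathrm{Dir}$ to get $S+M\subseteq s+\mathrm{Dir}\subseteq L_G$.
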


\begin{proof}
A legal system supplies moves $m_v\in U_v$ and a state $s$ whose coset under $M:=\operatorname{span}\{m_v:v\in V\}$ lies in $L_G$. Setting $\mathrm{Dir}:=M$ gives $s+\mathrm{Dir}\subseteq L_G$, and each generator $m_v$ witnesses that $\mathrm{Dir}\cap U_v$ is nonempty.

Conversely, suppose $s$ and $\mathrm{Dir}$ satisfy the two affine conditions. For each vertex $v$ choose $m_v\in\mathrm{Dir}\cap U_v$. Then $m_v$ is a move at $v$. Let $M:=\operatorname{span}\{m_v:v\in V\}$. Each generator lies in $\mathrm{Dir}$, so $M\subseteq\mathrm{Dir}$, and therefore $s+M\subseteq s+\mathrm{Dir}\subseteq L_G$. The chosen moves together with $s$ form a legal system. (If $\mathrm{Dir}=\{0\}$ then $\mathrm{Dir}\cap U_v$ is empty for every $v$, because $0\notin U_v$; any certifying subspace has dimension at least $1$. The argument permits $\mathrm{Dir}$ to be strictly larger than the span of the eventually chosen moves.)
\end{proof}

\begin{corollary}[Complement-closed affine certificate]
\label{cor:complement-certificate}
If $G$ admits a legal system, then the pair $s,\mathrm{Dir}$ in Theorem~\ref{thm:affine} may be chosen with $1\in\mathrm{Dir}$. Consequently the affine family $s+\mathrm{Dir}$ is closed under complementation.
\end{corollary}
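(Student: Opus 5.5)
Start from a legal system and apply Theorem~\ref{thm:affine} to obtain $s$ and $\mathrm{Dir}$ with $s+\mathrm{Dir}\subseteq L_G$ and $\mathrm{Dir}\cap U_v\neq\emptyset$ for every $v$. We enlarge $\mathrm{Dir}$ to $\mathrm{Dir}':=\mathrm{Dir}+\{0,1\}=\operatorname{span}(\mathrm{Dir}\cup\{1\})$. Since $\mathrm{Dir}\subseteq\mathrm{Dir}'$, the condition $\mathrm{Dir}'\cap U_v\neq\emptyset$ is inherited for every $v$, so only the containment $s+\mathrm{Dir}'\subseteq L_G$ needs checking.

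An element of $s+\mathrm{Dir}'$ is either $s+d$ or $s+d+1$ with $d\in\mathrm{Dir}$. The first kind lies in $L_G$ by hypothesis. For the second, note that $s+d+1$ is the characteristic vector of the complement $V\setminus T$ of $T:=s+d$. The definition of a legal state in Definition~\ref{defn:legal} is symmetric in $T$ and $V\setminus T$: both $G[T]$ and $G[V\setminus T]$ must be nonempty and connected. Hence $L_G$ is closed under complementation, and $s+d+1\in L_G$. Thus the pair $s,\mathrm{Dir}'$ satisfies both affine conditions and $1\in\mathrm{Dir}'$.

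For the second sentence, if $1\in\mathrm{Dir}$ then for every $T\in s+\mathrm{Dir}$ we have $T+1\in s+\mathrm{Dir}$, and $T+1$ is exactly the complement of $T$. If desired, the converse direction of Theorem~\ref{thm:affine} then produces an actual legal system whose moves lie in $\mathrm{Dir}'$.

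No step is a real obstacle. The only point needing care is observing that complementation preserves legality, which is immediate from the symmetric form of the definition.
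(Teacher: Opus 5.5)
Your proof is correct and matches the paper's argument exactly. You replace $\mathrm{Dir}$ by $\mathrm{Dir}+\operatorname{span}\{1\}$, so the $U_v$-intersections are inherited, and you use the complement-symmetry of legality to get $(s+\mathrm{Dir})\cup(s+\mathrm{Dir}+1)\subseteq L_G$.
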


\begin{proof}
Legality is complement-symmetric, so $L_G$ is invariant under translation by the all-ones vector $1$. Starting from a certificate $s,\mathrm{Dir}$ given by Theorem~\ref{thm:affine}, replace $\mathrm{Dir}$ by $\mathrm{Dir}+\operatorname{span}\{1\}$. Then
\[
s+(\mathrm{Dir}+\operatorname{span}\{1\})=(s+\mathrm{Dir})\cup(s+\mathrm{Dir}+1)\subseteq L_G,
\]
and the intersections with every $U_v$ remain nonempty.
\end{proof}

\section{Amalgams along induced $4$-cycles}
\label{sec:amalgam}

We first prove that a legal system on an amalgam along an induced $4$-cycle restricts to both sides. This is the obstruction-preserving operation used in the construction and is the $C=K(2,2)$ specialisation of~\cite[Remark~6.11]{JNW21}. For completeness and to make the hypotheses needed later explicit, we give a self-contained proof. The argument passes to the complement-closed affine certificate of Corollary~\ref{cor:complement-certificate} and studies its coordinate restriction to one side. If every restricted state is legal, the affine characterisation gives a legal system there directly; otherwise the induced $4$-cycle forces a small set of exceptional configurations that are eliminated by an isolation argument. Symmetry then gives the conclusion for both sides.

\begin{definition}[Induced $4$-cycle amalgam]
\label{defn:amalgam}
Let $\Gamma'$ and $\Gamma''$ be finite simple undirected graphs and let $C$ be a $4$-cycle. We call $\Gamma=\Gamma'\cup_C\Gamma''$ an \emph{induced $4$-cycle amalgam} if
\begin{enumerate}
\item $V(\Gamma')\cap V(\Gamma'')=V(C)$ and $E(\Gamma')\cap E(\Gamma'')=E(C)$;
\item the subgraph of $\Gamma$ induced by $V(C)$ is exactly $C$;
\item no edge of $\Gamma$ joins $V(\Gamma')\setminus V(C)$ to $V(\Gamma'')\setminus V(C)$.
\end{enumerate}
\end{definition}

\begin{theorem}[Induced $4$-cycle amalgam restriction]
\label{thm:amalgam}
Let $\Gamma=\Gamma'\cup_C\Gamma''$ be an induced $4$-cycle amalgam. If $\Gamma$ admits a legal system, then both $\Gamma'$ and $\Gamma''$ admit legal systems.
\end{theorem}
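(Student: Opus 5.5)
By symmetry of Definition~\ref{defn:amalgam} in $\Gamma'$ and $\Gamma''$, it suffices to treat $\Gamma'$. Write $V=V(\Gamma)$, $V'=V(\Gamma')$, and let $\pi:\F_2^{V}\to\F_2^{V'}$ denote coordinate restriction. The plan is as follows.

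\textbf{Step 1: set up the certificate.} By Corollary~\ref{cor:complement-certificate}, fix a certificate $s,\mathrm{Dir}$ for $\Gamma$ with $1\in\mathrm{Dir}$. Set $s'=\pi(s)$ and $\mathrm{Dir}'=\pi(\mathrm{Dir})$; the latter is a linear subspace containing $1$.

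\textbf{Step 2: the move condition transfers automatically.} Every vertex $v\in V'$ satisfies $N_{\Gamma'}(v)\subseteq N_\Gamma(v)$. Hence for $w\in\mathrm{Dir}\cap U_v(\Gamma)$ we get $\pi(w)\in U_v(\Gamma')$. The only content is therefore legality of the restricted states.

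\textbf{Step 3: legality of a restricted state.} Take $T\in s+\mathrm{Dir}$ and write $T'=T\cap V'$. I want to show that $\Gamma'[T']$ is nonempty and connected. The same argument applied to $V\setminus T$, which also lies in the family because $1\in\mathrm{Dir}$, then handles the complement. Decompose $\Gamma[T]$ as $\Gamma'[T']$ together with $\Gamma''[T\cap V(\Gamma'')]$, glued only along $T\cap V(C)$; this uses condition (3) and the edge condition in (1).

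\begin{itemize}
\item If $T'$ contains a vertex of $V'\setminus V(C)$, then any path in $\Gamma[T]$ between two vertices of $T'$ that detours into $V''\setminus V(C)$ leaves and re-enters through $V(C)\cap T$.
\item If $T\cap V(C)$ induces a connected subgraph of $C$, meaning a single vertex, a path of length one or two, or all of $C$, then each detour can be shortcut inside $C$. Since $C$ is induced, those $C$-edges are genuine edges of $\Gamma'$, so $\Gamma'[T']$ is connected.
\item The problematic configurations are those where $T\cap V(C)$ is an antipodal pair $\{a,c\}$ (or $\{b,d\}$), connected only through $\Gamma''$. The other problematic configuration is $T'\subseteq V(C)$ with $T'$ empty.
\end{itemize}

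\textbf{Step 4: the main obstacle, eliminating the exceptional configurations.} If no exceptional configuration occurs anywhere in $s+\mathrm{Dir}$, Theorem~\ref{thm:affine} finishes the argument. Otherwise I would modify the certificate rather than fight it. The plan is to replace $\mathrm{Dir}$ by the subspace of vectors whose $C$-coordinates lie in a suitably chosen subspace, or to use an isolation argument.

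The isolation argument runs as follows. Suppose $T$ meets $C$ exactly in $\{a,c\}$ and $T'$ is disconnected. Then the complement $V\setminus T$ contains $b,d$ and is connected. Now apply the moves $m_a$ and $m_c$. These flip $a$ and $c$, respectively, while fixing $b$ and $d$, since $b,d\in N(a)\cap N(c)$.

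Consider the state $T+m_a$. It loses $a$, possibly along with far vertices. It still contains $c$, but the component of $T'$ containing $a$ must connect to the rest of the state through $V(C)$, and the only remaining option is $c$. Checking the four states $T$, $T+m_a$, $T+m_c$, and $T+m_a+m_c$ on $C$ should force some state in the family to isolate a component in $\Gamma'\setminus V(C)$ from everything. That state is disconnected, which contradicts legality in $\Gamma$. This counting over the $2\times 2$ grid of $C$-patterns, generated by the moves at $a$ and $c$ (and likewise at $b$ and $d$), is where the real work lies. The key input is that a move at a vertex of $C$ must vanish on its two $C$-neighbours.

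The empty-$T'$ case should be ruled out the same way. If $T\subseteq V''\setminus V(C)$, then translating by a move at a vertex of $\Gamma'\setminus V(C)$, or by a move at a vertex of $C$, produces a state meeting both sides of the amalgam while avoiding $V(C)$. Such a state is disconnected by condition (3), a contradiction.
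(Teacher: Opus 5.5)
Your Steps 1--3 are sound: the move condition does transfer, and you correctly identify the two ways a restricted state can fail. The gap is in Step 4. You expect both exceptional configurations to produce a disconnected state in $s+\mathrm{Dir}$. That is false, so no argument of this shape can finish the proof.

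Let $\Gamma$ be the octahedron: $C=(a,b,c,d)$ together with non-adjacent apexes $x\in W'$ and $v\in W''$, each joined to all of $C$. Then $\Gamma'$ and $\Gamma''$ are cones on $C$. Take $s=\{x\}$ and $\mathrm{Dir}=\operatorname{span}\{\{a\},\{b\},\{c\},\{d\},\{x,v\}\}$. This subspace contains $1$ and a move at every vertex. Its states are the sets $\{x\}\cup A$ and $\{v\}\cup A$ with $A\subseteq V(C)$. Each such state, and its complement, contains an apex joined to all its other vertices, so every state is legal in $\Gamma$. Yet $\{v\}$ restricts to $\emptyset$, and $\{v,a,c\}$ restricts to $\{a,c\}$, which is disconnected in $\Gamma'$. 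Your four states $\{v,a,c\},\{v,a\},\{v,c\},\{v\}$ are all legal. Translating $\{v\}$ by the move $\{x,v\}$ at $x$ gives $\{x\}$, which does not meet both sides. In general, if $w\notin T$ and no neighbour of $w$ lies in $T$, then $w$ is isolated in $T+m_w$, so legality forces $T+m_w=\{w\}$. Your claim that such a translate ``meets both sides'' is the opposite of what happens.

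What is missing is a dichotomy: an exceptional state must either contradict legality or pin $\Gamma'$ down completely. In the second case you must build a legal system on $\Gamma'$ directly, not by restriction. The paper does this with three ingredients.
\begin{enumerate}
\item The isolation fact just stated.
\item A singleton lemma: if $\{x\}$ is a state and $w,z$ are distinct non-neighbours of $x$, isolation gives $m_w=\{x,w\}$ and $m_z=\{x,z\}$. Then $\{x,w,z\}$ is a state in which $x$ is isolated, a contradiction.
\item A reduction of the antipodal case to the empty case: choose $m\in\{m_a,m_c,m_a+m_c\}$ with $m\cap V(C)=\{a,c\}$. Then $S+m$ avoids $V(C)$ and so lies inside $W'$ or inside $W''$.
\end{enumerate}
Iterating the isolation fact and the singleton lemma forces $|W'|=|W''|=1$ with both apexes joined to all of $C$, which is exactly the example above. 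There the cone $\Gamma'$ has the explicit system $m_x=\{x\}$, $m_a=m_c=\{a,c\}$, $m_b=m_d=\{b,d\}$, with starting state $\{a,b,x\}$.

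Your alternative of shrinking $\mathrm{Dir}$ to vectors with prescribed $C$-coordinates is not yet an argument. You would have to show that the smaller subspace still contains a move at every vertex of $W'$ and $W''$. Those moves have uncontrolled $C$-coordinates, and that is precisely where the difficulty lies.
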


\begin{proof}
Label the vertices of $C$ as $a,b,c,d$ in cyclic order, and set
\[
W':=V(\Gamma')\setminus V(C),\qquad W'':=V(\Gamma'')\setminus V(C).
\]
By Definition~\ref{defn:amalgam}, the subgraphs of $\Gamma$ induced by $V(\Gamma')$ and $V(\Gamma'')$ are exactly $\Gamma'$ and $\Gamma''$, respectively, and there are no edges from $W'$ to $W''$.

By Corollary~\ref{cor:complement-certificate}, choose $s\in\F_2^{V(\Gamma)}$ and a linear subspace $\mathrm{Dir}\leq \F_2^{V(\Gamma)}$ such that
\[
\mathcal A:=s+\mathrm{Dir}\subseteq L_\Gamma,
\qquad
\mathrm{Dir}\cap U_v(\Gamma)\neq\emptyset\quad(v\in V(\Gamma)),
\qquad
1\in\mathrm{Dir}.
\]
For every vertex $v$ fix one vector $m_v\in\mathrm{Dir}\cap U_v(\Gamma)$. Thus every member of $\mathcal A$ is legal, $\mathcal A$ is closed under translation by every $m_v$, and, since $1\in\mathrm{Dir}$, $\mathcal A$ is also closed under complementation. We work throughout with this affine family; we do not require $\mathrm{Dir}$ to equal the span of the selected moves $m_v$.

\begin{claim}
The $4$-cycle $Q$ on vertices $a,b,c,d$ with edges $ab,bc,cd,da$ admits a legal system.
\end{claim}

Assign $m_a=m_c=\{a,c\}$ and $m_b=m_d=\{b,d\}$. Each of these is a move at the indicated vertex, because $\{a,c\}$ is an independent set containing $a$ (respectively $c$) and containing no neighbour of $a$ (respectively of $c$), and likewise for $\{b,d\}$. Let $S=\{a,b\}$. The $\F_2$-span of the two distinct moves has order $4$, and the coset of $S$ is $\bigl\{\{a,b\},\{b,c\},\{a,d\},\{c,d\}\bigr\}$. Each of these four sets is an edge of $Q$, hence induces a connected nonempty subgraph, and the complement of each is the opposite edge, likewise connected and nonempty.

\begin{claim}
Let $H$ be a finite simple graph, let $\mathcal B=t+D\subseteq L_H$ be an affine family of legal states, and let $n_w\in D\cap U_w(H)$. If $T\in\mathcal B$ and $w\notin T$ with no neighbour of $w$ in $T$, then $T+n_w=\{w\}$. If $w\in T$ and every neighbour of $w$ belongs to $T$, then $T+n_w=V(H)\setminus\{w\}$.
\end{claim}

In the first case $n_w$ contains $w$ and contains no neighbour of $w$, so $w$ belongs to $T+n_w$ while no neighbour of $w$ does. Thus $w$ is isolated in $H[T+n_w]$. Since $n_w\in D$, the state $T+n_w$ lies in $\mathcal B$ and is therefore nonempty and connected, forcing $T+n_w=\{w\}$. In the second case $w$ leaves the state while every neighbour of $w$ stays, so $w$ is isolated in the complement of $T+n_w$. Again $T+n_w$ is legal, hence its complement is nonempty and connected; therefore that complement equals $\{w\}$.

\begin{claim}
Let $H$ be a finite simple graph and let $\mathcal B=t+D\subseteq L_H$ be complement-closed. Suppose $D$ contains a move $n_v\in U_v(H)$ for every vertex $v$. Then $\mathcal B$ contains neither a singleton $\{x\}$ nor a co-singleton $V(H)\setminus\{x\}$ when $x$ has at least two non-neighbours.
\end{claim}

Suppose first that $\{x\}\in\mathcal B$, and let $w,z$ be distinct non-neighbours of $x$. Applying the previous claim to the state $\{x\}$ at $w$ gives $\{x\}+n_w=\{w\}$, hence $n_w=\{x,w\}$. Likewise $n_z=\{x,z\}$. Therefore $n_w+n_z=\{w,z\}\in D$, and so
\[
\{x\}+n_w+n_z=\{x,w,z\}\in\mathcal B.
\]
But $x$ is adjacent to neither $w$ nor $z$, so $x$ is isolated in the induced subgraph on this three-vertex state, contradicting legality. If instead $V(H)\setminus\{x\}\in\mathcal B$, complement-closure gives $\{x\}\in\mathcal B$, and the same contradiction applies.

\begin{claim}
The cone on $C$ from a single apex $x$ admits a legal system.
\end{claim}

Let $H$ have vertex-set $\{x,a,b,c,d\}$, with cycle edges $ab,bc,cd,da$ and apex edges $xa,xb,xc,xd$. Assign $m_x=\{x\}$, $m_a=m_c=\{a,c\}$, and $m_b=m_d=\{b,d\}$. These three distinct moves have pairwise disjoint supports, so they are linearly independent and their span has order $8$. The coset of $\{a,b,x\}$ consists of
\[
\{a,b,x\},\ \{a,b\},\ \{b,c,x\},\ \{b,c\},\ \{a,d,x\},\ \{a,d\},\ \{c,d,x\},\ \{c,d\}.
\]
For each of the four triples containing $x$, the two equator vertices are adjacent on $C$ and both are adjacent to $x$, so the induced subgraph is a triangle; its complement is the opposite equator edge. For each of the four equator edges not containing $x$, the induced subgraph is an edge; its complement is the opposite edge together with $x$, which again induces a triangle. Thus all eight states are legal.

\noindent\textbf{Step 1.}
The empty-interior side is the interface cycle, which has a legal system. If $W'$ is empty, then $V(\Gamma')=\{a,b,c,d\}$ and $\Gamma'$ equals $C$, because $C$ is induced in $\Gamma$ and $\Gamma'$ contains the four cycle edges. The first claim supplies a legal system on $\Gamma'$.

\noindent\textbf{Step 2.}
Now assume $W'$ is nonempty. Let
\[
\pi\colon\F_2^{V(\Gamma)}\longrightarrow \F_2^{V(\Gamma')}
\]
be coordinate projection. For every $x\in V(\Gamma')$, the vector $\pi(m_x)=m_x\cap V(\Gamma')$ is a move at $x$ in $\Gamma'$ and belongs to $\pi(\mathrm{Dir})$. If every member of
\[
\pi(\mathcal A)=\pi(s)+\pi(\mathrm{Dir})
\]
is a legal state of $\Gamma'$, then Theorem~\ref{thm:affine}, applied to $\pi(s)$ and $\pi(\mathrm{Dir})$, immediately gives a legal system on $\Gamma'$. Hence it remains only to treat the case in which some $S\in\mathcal A$ has
\[
S':=S\cap V(\Gamma')
\]
not legal in $\Gamma'$. Since $\mathcal A$ is complement-closed, after replacing $S$ by $V(\Gamma)\setminus S$ if necessary, we may assume that $S'$ is empty or that $\Gamma'[S']$ is disconnected.

\noindent\textbf{Step 3.}
First suppose $S\cap V(\Gamma')$ is empty. Since $S$ is legal in $\Gamma$, it is nonempty, so $S$ is a nonempty subset of $W''$. Pick $x\in W'$. Then $x\notin S$, and every neighbour of $x$ lies in $V(\Gamma')$, hence outside $S$. The isolation claim gives $S+m_x=\{x\}$, so $\{x\}\in\mathcal A$. Every vertex of $W''$ is a non-neighbour of $x$.

If $W''$ is empty, this contradicts the nonemptiness of $S$. If $|W''|\ge2$, the singleton claim is a contradiction. It remains to consider $W''=\{v\}$. If $x$ has any non-neighbour in $V(\Gamma')\setminus\{x\}$, then together with $v$ it has at least two non-neighbours, again contradicting the singleton claim. Hence $x$ is adjacent to every other vertex of $\Gamma'$, in particular to $a,b,c,d$.

Because there are no cross edges between $W'$ and $W''$, the vertex $v$ is not adjacent to $x$. Applying the isolation claim to the state $\{x\}$ at $v$ yields $\{x\}+m_v=\{v\}$, so $m_v=\{x,v\}$ and $\{v\}\in\mathcal A$. Let $k$ be the number of neighbours of $v$ in $C$. If $k\le3$, choose $z\in C$ not adjacent to $v$. Applying the isolation claim to $\{v\}$ at $z$ gives $m_z=\{v,z\}$ and $\{z\}\in\mathcal A$. Hence
\[
\{v\}+m_v+m_z=\{v,x,z\}\in\mathcal A.
\]
The only edge of $\Gamma$ on $\{v,x,z\}$ is $xz$, so $v$ is isolated, a contradiction. Thus $k=4$ and $v$ is adjacent to all four vertices of $C$.

If $W'$ contains a vertex $y\neq x$, then $y$ is adjacent to $x$ and is not adjacent to $v$. Isolation on the state $\{v\}$ at $y$ gives $m_y=\{v,y\}$. Therefore
\[
\{v\}+m_v+m_y=\{v,x,y\}\in\mathcal A,
\]
and $v$ is isolated because there are no edges from $v$ to $W'$, again a contradiction. Consequently $W'=\{x\}$. Since $x$ is adjacent to all of $C$, the graph $\Gamma'$ is exactly the cone on $C$ from $x$, which has a legal system by the cone claim. Thus the empty-restriction case always either contradicts legality of $\mathcal A$ or directly supplies a legal system on $\Gamma'$.

\noindent\textbf{Step 4.}
Now suppose $X:=S\cap V(\Gamma')$ is nonempty and $\Gamma'[X]$ is disconnected. Let $X_1,\ldots,X_k$, $k\ge2$, be its connected components. Every component $X_i$ meets $C$: otherwise $X_i\subseteq W'$, and because there are no edges from $W'$ to $W''$, the set $X_i$ would be a connected component of $\Gamma[S]$, contradicting the connectedness of the legal state $S$. Thus $C\cap S$ meets at least two distinct components of $\Gamma'[X]$, so the subgraph of the induced $4$-cycle $C$ on $C\cap S$ is disconnected. The only nonempty disconnected induced subgraphs of a $4$-cycle are its two opposite pairs. After relabelling, we may therefore assume
\[
a,c\in S,\qquad b,d\notin S,
\]
and $a,c$ lie in different components of $\Gamma'[X]$. Since $\Gamma[S]$ is connected, an $S$-path from $a$ to $c$ must pass through $W''$, so $S\cap W''\neq\emptyset$.

The moves $m_a,m_c$ contain neither $b$ nor $d$. If $m_a$ contains $c$, let $m=m_a$; if $m_c$ contains $a$, let $m=m_c$; otherwise let $m=m_a+m_c$. In every case $m\in\mathrm{Dir}$ and
\[
m\cap C=\{a,c\}.
\]
Set $T:=S+m\in\mathcal A$. Then $T\cap C=\emptyset$. Since there are no edges from $W'$ to $W''$, the induced graph $\Gamma[T]$ is the disjoint union of the subgraphs on $T\cap W'$ and $T\cap W''$. Because $T$ is a nonempty connected legal state, exactly one of these two sets is empty.

Suppose first that $T$ is a nonempty subset of $W''$. Pick $x\in W'$. No neighbour of $x$ lies in $T$, so isolation gives $T+m_x=\{x\}\in\mathcal A$. If $|W''|\ge2$, this contradicts the singleton claim. The case $W''=\emptyset$ is impossible because $S\cap W''\neq\emptyset$. If $|W''|=1$, the residual configuration is exactly the singleton configuration analysed in Step~3: either the singleton claim gives a contradiction or $\Gamma'$ is the cone on $C$, which has a legal system.

Suppose instead that $T$ is a nonempty subset of $W'$. Choose $y\in W''$, which exists because $S\cap W''\neq\emptyset$. No neighbour of $y$ lies in $T$, so isolation gives $T+m_y=\{y\}\in\mathcal A$. The non-neighbour set of $y$ contains all of $W'$. Hence $|W'|\ge2$ contradicts the singleton claim. If $W'=\{x\}$, then $T=\{x\}\in\mathcal A$. Now the non-neighbour set of $x$ contains $W''$. If $|W''|\ge2$, the singleton claim again gives a contradiction; if $|W''|=1$, we are again in the residual cone configuration of Step~3. Thus this case also either contradicts legality or yields a legal system on $\Gamma'$.

We have proved that $\Gamma'$ admits a legal system. By symmetry the same argument gives a legal system on $\Gamma''$.
\end{proof}

\begin{corollary}[Obstruction inheritance]
\label{cor:amalgam-obstruction}
Let $\Gamma=\Gamma'\cup_C\Gamma''$ be an induced $4$-cycle amalgam. If either $\Gamma'$ or $\Gamma''$ admits no legal system, then $\Gamma$ admits no legal system.
\end{corollary}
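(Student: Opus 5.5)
This corollary is the contrapositive of Theorem~\ref{thm:amalgam}, so the plan is simply to argue by contradiction. Suppose that $\Gamma=\Gamma'\cup_C\Gamma''$ is an induced $4$-cycle amalgam in the sense of Definition~\ref{defn:amalgam}, and that $\Gamma$ admits a legal system.

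Theorem~\ref{thm:amalgam} applies to exactly this situation and needs no extra hypotheses. It yields legal systems on both $\Gamma'$ and $\Gamma''$. This contradicts the assumption that at least one of the two sides admits no legal system. Hence $\Gamma$ has no legal system.

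The only point worth checking is that the hypotheses match. The corollary assumes the same induced $4$-cycle amalgam structure as the theorem, with no condition on which side is empty or nonempty. The theorem's proof covers the empty-interior cases (Step~1) and the cone cases internally, so nothing needs to be excluded here. There is no real obstacle.
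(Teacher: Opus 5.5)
Your proposal is correct and matches the paper's proof, which likewise derives the corollary as the contrapositive of Theorem~\ref{thm:amalgam}. Your check that the hypotheses coincide, with no extra conditions needed, is accurate.
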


\begin{proof}
This is the contrapositive of Theorem~\ref{thm:amalgam}.
\end{proof}

By induction, the same obstruction persists through any finite sequence of induced $4$-cycle amalgams.

\begin{remark}
\label{rem:amalgam-scope}
Theorem~\ref{thm:amalgam}, with the convention of Definition~\ref{defn:amalgam}, is the specialisation of~\cite[Remark~6.11]{JNW21} to a cocktail-party interface $C=K(2,2)$. The source's~\cite[Lemma~6.10]{JNW21} is the special case in which one side is the cone on a $4$-cycle. Requiring that the induced subgraph on $\{a,b,c,d\}$ equal $C$ records that the interface is the cocktail-party graph itself. Extra edges of a side among its interior, or from its interior to $C$, remain. Because $C$ is induced in $\Gamma$ and there are no cross edges between interiors, each side is an induced subgraph of $\Gamma$.
\end{remark}

\section{The prism--cap host}
\label{sec:host}

This section constructs the $33$-vertex prism--cap host and verifies the structural properties used in the final argument. Figure~\ref{fig:prism-cap} displays the two building blocks, while Figure~\ref{fig:full-host} shows the complete $33$-vertex graph obtained after all three caps are attached. We then check its order, regularity, girth, and the exact induced-$4$-cycle interfaces needed for Theorem~\ref{thm:amalgam}. The connectivity proof is separated from these local checks: deletions in the prism and in a cap interior are controlled first, and the only delicate case---three deleted prism vertices---is reduced to three surviving interface blocks joined by the remaining prism links.

\begin{definition}[The prism--cap graph]
\label{defn:G}
Let $R$ be the hexagonal prism with vertex-set $\{a_0,a_1,a_2,a_3,a_4,a_5,b_0,b_1,b_2,b_3,b_4,b_5\}$ and with edge-set consisting of the twelve rail edges $a_ia_{i+1}$ and $b_ib_{i+1}$ together with the six spokes $a_ib_i$, where indices of rails are read modulo $6$. Under this labelling, $R=C_6\times K_2$. Let $Q_0,Q_1,Q_2$ be the three $4$-cycles
\[
Q_0=(a_0,a_1,b_1,b_0),\qquad Q_1=(a_2,a_3,b_3,b_2),\qquad Q_2=(a_4,a_5,b_5,b_4),
\]
written in cyclic order. For each $t\in\{0,1,2\}$ let $T_t$ be the graph obtained from the $4$-cycle $Q_t$ by adding seven fresh vertices
\[
X_t=\{x_{t,0},x_{t,1},x_{t,2}\},\qquad
Y_t=\{y_{t,0},y_{t,1},y_{t,2},y_{t,3}\}.
\]
The graph $T_t$ contains every edge $x_{t,r}y_{t,j}$, the four matching edges joining $Q_t[j]$ to $y_{t,j}$, and no other new edges. For $T_0$, write the interior vertices as
\[
x_{00},x_{01},x_{02},\qquad y_{00},y_{01},y_{02},y_{03},
\]
and use the analogous compact names with first index $1$ and $2$ for $T_1$ and $T_2$. Let $G$ be the simple graph obtained from the disjoint union of $R$, $T_0$, $T_1$, $T_2$ by identifying, for each $t$, the boundary $4$-cycle of $T_t$ with the cycle $Q_t$ of $R$ along the displayed cyclic order. Write $I_t:=X_t\cup Y_t$ for the open interior of $T_t$. Explicitly the matching edges are
\begin{align*}
a_0y_{00},&\ a_1y_{01},\ b_1y_{02},\ b_0y_{03},\\
a_2y_{10},&\ a_3y_{11},\ b_3y_{12},\ b_2y_{13},\\
a_4y_{20},&\ a_5y_{21},\ b_5y_{22},\ b_4y_{23}.
\end{align*}
\end{definition}

\begin{figure}[!htbp]
\centering
\begin{minipage}[c]{0.45\textwidth}
\centering
\begin{tikzpicture}[scale=0.80]
  \def\Ra{2.65}
  \def\Rb{1.45}
  \fill[blue!22] (90:\Ra) -- (30:\Ra) -- (30:\Rb) -- (90:\Rb) -- cycle;
  \fill[green!22] (-30:\Ra) -- (-90:\Ra) -- (-90:\Rb) -- (-30:\Rb) -- cycle;
  \fill[orange!22] (-150:\Ra) -- (150:\Ra) -- (150:\Rb) -- (-150:\Rb) -- cycle;
  \foreach \ang in {30,90,150,-150,-90,-30} {
    \draw[line width=1pt] (\ang:\Ra) -- (\ang:\Rb);
  }
  \draw[line width=1pt] (90:\Ra) -- (30:\Ra) -- (-30:\Ra) -- (-90:\Ra) -- (-150:\Ra) -- (150:\Ra) -- cycle;
  \draw[line width=1pt] (90:\Rb) -- (30:\Rb) -- (-30:\Rb) -- (-90:\Rb) -- (-150:\Rb) -- (150:\Rb) -- cycle;
  \foreach \ang/\lab in {90/a_0,30/a_1,-30/a_2,-90/a_3,-150/a_4,150/a_5} {
    \filldraw (\ang:\Ra) circle (1.9pt);
    \node[font=\scriptsize,fill=white,inner sep=1pt] at (\ang:{\Ra+0.42}) {$\lab$};
  }
  \foreach \ang/\lab in {90/b_0,30/b_1,-30/b_2,-90/b_3,-150/b_4,150/b_5} {
    \filldraw (\ang:\Rb) circle (1.9pt);
    \node[font=\scriptsize,fill=white,inner sep=1pt] at (\ang:{\Rb-0.38}) {$\lab$};
  }
  \node[font=\scriptsize,inner sep=0pt] at (60:1.83) {$Q_0$};
  \node[font=\scriptsize,inner sep=0pt] at (-60:1.83) {$Q_1$};
  \node[font=\scriptsize,inner sep=0pt] at (180:1.83) {$Q_2$};
\end{tikzpicture}
\end{minipage}\hfill
\begin{minipage}[c]{0.52\textwidth}
\centering
\begin{tikzpicture}[scale=0.86]
  % Boundary square Q_t on the left.
  \coordinate (c0) at (0,1.55);
  \coordinate (c1) at (1.55,1.55);
  \coordinate (c2) at (1.55,0);
  \coordinate (c3) at (0,0);
  % Four Y-vertices form a separated middle column.
  \coordinate (y0) at (3.15,2.45);
  \coordinate (y1) at (3.15,1.55);
  \coordinate (y2) at (3.15,0.25);
  \coordinate (y3) at (3.15,-0.65);
  % Three X-vertices form the right column of K_{3,4}.
  \coordinate (x0) at (5.35,2.00);
  \coordinate (x1) at (5.35,0.90);
  \coordinate (x2) at (5.35,-0.20);

  % Draw K_{3,4} first so node labels remain unobstructed.
  \foreach \yy in {y0,y1,y2,y3} {
    \foreach \xx in {x0,x1,x2} {
      \draw[gray!55,line width=1pt] (\yy) -- (\xx);
    }
  }
  \fill[blue!12] (c0) -- (c1) -- (c2) -- (c3) -- cycle;
  \draw[line width=1.1pt] (c0) -- (c1) -- (c2) -- (c3) -- cycle;
  \draw[line width=1.1pt] (c0) -- (y0);
  \draw[line width=1.1pt] (c1) -- (y1);
  \draw[line width=1.1pt] (c2) -- (y2);
  \draw[line width=1.1pt] (c3) -- (y3);

  \foreach \p in {c0,c1,c2,c3,y0,y1,y2,y3,x0,x1,x2} {
    \filldraw[fill=white,line width=0.9pt] (\p) circle (2.1pt);
  }

  \node[font=\scriptsize,fill=white,inner sep=1.2pt,left=5pt] at (c0) {$Q_t[0]$};
  \node[font=\scriptsize,fill=white,inner sep=1.2pt,above=6pt] at (c1) {$Q_t[1]$};
  \node[font=\scriptsize,fill=white,inner sep=1.2pt,below=6pt] at (c2) {$Q_t[2]$};
  \node[font=\scriptsize,fill=white,inner sep=1.2pt,left=5pt] at (c3) {$Q_t[3]$};
  \node[font=\scriptsize,inner sep=0pt] at (0.775,0.775) {$Q_t$};

  \node[font=\scriptsize,fill=white,inner sep=1.2pt,above=5pt] at (y0) {$y_{t,0}$};
  \node[font=\scriptsize,fill=white,inner sep=1.2pt,below=5pt] at (y1) {$y_{t,1}$};
  \node[font=\scriptsize,fill=white,inner sep=1.2pt,above=5pt] at (y2) {$y_{t,2}$};
  \node[font=\scriptsize,fill=white,inner sep=1.2pt,below=5pt] at (y3) {$y_{t,3}$};

  \node[font=\scriptsize,fill=white,inner sep=1.2pt,right=3pt] at (x0) {$x_{t,0}$};
  \node[font=\scriptsize,fill=white,inner sep=1.2pt,right=3pt] at (x1) {$x_{t,1}$};
  \node[font=\scriptsize,fill=white,inner sep=1.2pt,right=3pt] at (x2) {$x_{t,2}$};
\end{tikzpicture}
\end{minipage}
\caption{The hexagonal prism $R$ with its three disjoint interface $4$-cycles $Q_0,Q_1,Q_2$ (left), and one representative cap $T_t$ whose $K_{3,4}$ interior is matched to the boundary cycle $Q_t$ (right).}

\label{fig:prism-cap}
\end{figure}

\begin{proposition}[Structure and connectivity of the host]
\label{prop:host}
Let $G$ be the graph of Definition~\ref{defn:G}. Then the following hold.
\begin{enumerate}
\item $G$ has $33$ vertices and $66$ edges, is $4$-regular and triangle-free, has girth four, and satisfies $\kappa_2(G)=1$.
\item $G$ is four-vertex-connected in the sense of~\cite[Definition~6.3]{JNW21} with $k=4$: $|V(G)|=33\ge5$, and $G-U$ is connected for every $U\subseteq V(G)$ with $|U|\le3$.
\item For each $t\in\{0,1,2\}$, the pair $(R,T_t)$ forms an exact induced-$4$-cycle interface: $V(R)\cap V(T_t)=V(Q_t)$ and $E(R)\cap E(T_t)=E(Q_t)$; moreover $Q_t$ is induced in $G$, and no edge joins $V(T_t)\setminus V(Q_t)$ to $V(G)\setminus V(T_t)$.
\end{enumerate}
\end{proposition}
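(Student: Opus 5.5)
The plan is to verify the three parts in order: local counting first, then the interface conditions, then connectivity, which is the only part needing real case analysis.

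\textbf{Counting, regularity, girth, curvature (part 1).} The vertex count is $12+3\cdot7=33$. For degrees, take each class separately:
\begin{itemize}
\item a prism vertex has degree $3$ in $R$, and each lies on exactly one $Q_t$ (the three cycles partition $V(R)$), so it gains exactly one matching edge, giving degree $4$;
\item each $y_{t,j}$ has three neighbours in $X_t$ and one matching edge, so degree $4$;
\item each $x_{t,r}$ has four neighbours in $Y_t$, so degree $4$.
\end{itemize}
Hence $G$ is $4$-regular, and Lemma~\ref{lem:kappa} gives $|E(G)|=66$ and $\kappa_2(G)=1$.

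For triangle-freeness, $R$ is bipartite (an even cycle times $K_2$), and $K_{3,4}$ is bipartite. A triangle using a matching edge $uy$ would need a common neighbour of $u\in V(Q_t)$ and $y\in Y_t$. The neighbours of $y$ are $X_t\cup\{u\}$, and no vertex of $X_t$ is adjacent to $u$, so no such triangle exists. Girth four is witnessed by $Q_0$. Since $G$ is triangle-free, $\kappa(G)=\kappa_2(G)$.

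\textbf{Interfaces (part 3).} This follows directly from Definition~\ref{defn:G}. The new vertices are fresh, so $V(R)\cap V(T_t)=V(Q_t)$. The only new edges touch $I_t$, so $E(R)\cap E(T_t)=E(Q_t)$. For inducedness, the diagonals of $Q_0$ are $a_0b_1$ and $a_1b_0$, which are not prism edges, and neither is added later; the same holds for $Q_1,Q_2$ by rotation. Finally, vertices of $I_t$ are adjacent only to $I_t\cup V(Q_t)$.

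\textbf{Connectivity (part 2).} Fix $U$ with $|U|\le3$ and case on $k:=|U\cap V(R)|$.

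Two facts are used throughout. First, $K_{3,4}$ minus at most three vertices remains connected unless an entire side is removed; the whole side $X_t$ can be removed only when $U=X_t$. Second, each cap interior $I_t$ is attached to $Q_t$ by four disjoint matching edges.

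\emph{Claim A.} If $k\le2$, then $R-U$ is connected, since the prism is $3$-connected. Each surviving cap interior stays attached to $R-U$:
\begin{itemize}
\item If $U\cap I_t=\emptyset$, then $I_t$ is connected and at least two of its matching edges survive.
\item If $|U\cap I_t|\le 2$, the remainder of $I_t$ is connected (a side cannot be emptied with two deletions, except trivial cases to be checked). Enough of its $y$-vertices survive with intact matching partners: $|U|\le3$ kills at most three of the four $y$-to-$Q_t$ routes.
\item If $U=X_t$, the four $y_{t,j}$ each remain attached to $Q_t$ individually.
\end{itemize}
In each sub-case the remainder of $I_t$ also stays connected to $R-U$.

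\emph{Claim B (main obstacle).} If $k=3$, then $U\subseteq V(R)$ and every cap interior is intact and connected. Contract each $T_t$ to a ``block'' $B_t=Q_t\cup I_t$. Each block minus $U$ is connected whenever $Q_t\setminus U\neq\emptyset$, because every surviving $Q_t$-vertex reaches $I_t$ through its matching edge. Since $|U|=3<4$, every block survives and is internally connected. It then remains to show that the three blocks are linked by surviving prism edges between consecutive $Q_t$'s.

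$Q_t$ and $Q_{t+1}$ are joined by exactly two prism edges, $a_{2t+1}a_{2t+2}$ and $b_{2t+1}b_{2t+2}$. Arranged cyclically, the three blocks therefore have six link edges, two per consecutive pair. Each deleted vertex destroys at most one link edge. Hence three deletions can sever at most three links, and severing both links of two different consecutive pairs would need four deletions. So at least two of the three consecutive pairs keep a link, and the block cycle stays connected.

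I expect this $k=3$ reduction, together with the bookkeeping in Claim A's sub-cases $|U\cap I_t|\in\{1,2\}$, to be the delicate part. If the case analysis grows unwieldy, I would fall back on exhibiting four internally disjoint paths between suitable representatives via Menger's theorem.
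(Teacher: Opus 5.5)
Your proof is correct and follows essentially the same route as the paper: local degree and interface checks, connectivity of the prism after at most two deletions plus attachment of each surviving cap interior through a surviving matching edge, and, for three prism deletions, the three-block cycle whose six link edges are pairwise vertex-disjoint. The differences are cosmetic. You cite $3$-connectivity of $C_6\times K_2$ where the paper checks $|U|\le 2$ by hand, and your Claim A bullet list omits the sub-case $|U\cap I_t|=3$ with $U\neq X_t$, which your stated $K_{3,4}$ fact already covers.
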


\begin{proof}
The three sets $V(Q_0)=\{a_0,a_1,b_0,b_1\}$, $V(Q_1)=\{a_2,a_3,b_2,b_3\}$, $V(Q_2)=\{a_4,a_5,b_4,b_5\}$ are pairwise disjoint and their union is $V(R)$. The seven interior vertices of each $T_t$ are chosen fresh and pairwise disjoint from $V(R)$ and from the interiors of the other caps. Thus $|V(G)|=12+3\cdot 7=33$. The thirty-three names $a_0,\ldots,a_5$, $b_0,\ldots,b_5$, $x_{00},x_{01},x_{02}$, $y_{00},y_{01},y_{02},y_{03}$, $x_{10},x_{11},x_{12}$, $y_{10},y_{11},y_{12},y_{13}$, $x_{20},x_{21},x_{22}$, $y_{20},y_{21},y_{22},y_{23}$ are distinct.

\noindent\textbf{Step 1.}
The edge-set is simple and $4$-regular, and each interface is an exact induced $4$-cycle. The edges of $G$ are exactly: the eighteen edges of $R$; for each $t$, the twelve edges of the complete bipartite graph between $X_t$ and $Y_t$; and for each $t$, the four matching edges $Q_t[j]y_{t,j}$. The four boundary edges of each $T_t$ are identified with the four edges of $Q_t$ already present in $R$, and are not duplicated. There are no loops in this list, and no two listed pairs coincide after the boundary identifications. Hence $G$ is simple, and $|E(G)|=18+3\cdot(12+4)=66$.

The open neighbourhoods in $G$ are as follows. For each $t$ and each $r\in\{0,1,2\}$, the vertex $x_{t,r}$ is adjacent exactly to the four vertices of $Y_t$, so has degree $4$. For each $t$ and each $j\in\{0,1,2,3\}$, the vertex $y_{t,j}$ is adjacent exactly to the three vertices of $X_t$ and to $Q_t[j]$, so has degree $4$. In $R$ every vertex has degree $3$ (two rail edges and one spoke). Each vertex of $R$ lies in exactly one of $Q_0,Q_1,Q_2$, and therefore receives exactly one matching edge from the corresponding cap. Thus every vertex of $R$ has degree $4$ in $G$. Explicitly,
\begin{align*}
N(a_0)&=\{a_1,a_5,b_0,y_{00}\},&
N(a_1)&=\{a_0,a_2,b_1,y_{01}\},\\
N(a_2)&=\{a_1,a_3,b_2,y_{10}\},&
N(a_3)&=\{a_2,a_4,b_3,y_{11}\},\\
N(a_4)&=\{a_3,a_5,b_4,y_{20}\},&
N(a_5)&=\{a_0,a_4,b_5,y_{21}\},\\
N(b_0)&=\{a_0,b_1,b_5,y_{03}\},&
N(b_1)&=\{a_1,b_0,b_2,y_{02}\},\\
N(b_2)&=\{a_2,b_1,b_3,y_{13}\},&
N(b_3)&=\{a_3,b_2,b_4,y_{12}\},\\
N(b_4)&=\{a_4,b_3,b_5,y_{23}\},&
N(b_5)&=\{a_5,b_0,b_4,y_{22}\}.
\end{align*}
Hence $G$ is $4$-regular. Lemma~\ref{lem:kappa} gives $\kappa_2(G)=1$.

Fix $t\in\{0,1,2\}$. By construction $V(T_t)=V(Q_t)\cup X_t\cup Y_t$, and $V(R)\cap V(T_t)=V(Q_t)$. The edges of $T_t$ are the four edges of $Q_t$, the twelve edges of the complete bipartite graph on $X_t\cup Y_t$, and the four matching edges. Within the prism, the only possible diagonals of $Q_t$ are $a_{2t}b_{2t+1}$ and $a_{2t+1}b_{2t}$; neither is a rail edge or a spoke. Hence $R[V(Q_t)]=Q_t$ and $E(R)\cap E(T_t)=E(Q_t)$.

The $4$-cycle $Q_t$ is induced in $G$: $G$ has no edge among $V(Q_t)$ other than those of $R$ or of $T_t$, and both of those graphs induce exactly the $4$-cycle on $V(Q_t)$. In particular neither diagonal of $Q_t$ is present. There is no edge of $G$ with one end in the open cap $I_t:=X_t\cup Y_t$ and the other end in $V(G)\setminus V(T_t)$. The only edges incident to a vertex of $X_t$ are the four edges to $Y_t$. The only edge incident to $y_{t,j}$ other than the three edges to $X_t$ is the matching edge to $Q_t[j]$, and $Q_t[j]$ lies in $V(Q_t)\subseteq V(T_t)$. Interiors of distinct caps are disjoint and no edge between them is added.

\noindent\textbf{Step 2.}
The graph $G$ is triangle-free and has girth four. First $R$ is triangle-free. A triangle contained in one rail would be a triangle in a $6$-cycle, which does not exist. A triangle using vertices of both rails must use at least one spoke. Two adjacent rail vertices $a_i,a_{i+1}$ have spokes to the distinct vertices $b_i,b_{i+1}$, and $a_i$ is not adjacent to $b_{i+1}$, while $a_{i+1}$ is not adjacent to $b_i$. A triple $\{a_i,b_i,a_j\}$ with $j\notin\{i\}$ has the single spoke $a_ib_i$ and at most the rail edge $a_ia_j$; it does not contain a third edge closing a triangle. The same holds with the roles of the rails exchanged. Thus $R$ has no triangle.

Next $T_t$ is triangle-free. The complete bipartite graph on $X_t\cup Y_t$ has no odd cycle, hence no triangle. The boundary $4$-cycle has no triangle. A triangle containing a matching edge $Q_t[j]y_{t,j}$ would require a common neighbour of $Q_t[j]$ and $y_{t,j}$ in $T_t$. In $T_t$ the neighbours of $Q_t[j]$ other than $y_{t,j}$ are the two boundary neighbours of $Q_t[j]$ on the $4$-cycle, while the neighbours of $y_{t,j}$ other than $Q_t[j]$ are the three vertices of $X_t$. These two sets are disjoint, so no such common neighbour exists.

In the amalgam $G$ there is no edge joining $I_t$ to $V(G)\setminus V(T_t)$, by Step~1. Therefore a triangle of $G$ cannot use a vertex of $I_t$ together with a vertex outside $T_t$. Any triangle that uses a vertex of $I_t$ is then a triangle of $T_t$, which does not exist. Any triangle contained in $V(R)$ is a triangle of $R$, which does not exist. Thus $G$ is triangle-free. The $4$-cycle $Q_0$ is a cycle of $G$, so the girth of $G$ is at most $4$. Combined with triangle-freeness, the girth is exactly $4$.

\noindent\textbf{Step 3.}
Deleting at most two vertices of the prism leaves it connected. We claim that $R-U$ is connected for every $U\subseteq V(R)$ with $|U|\le 2$. If $U$ is empty, the $a$-rail is a $6$-cycle and each spoke $a_ib_i$ joins it to the $b$-rail, so $R$ is connected. If $U=\{a_i\}$, the $b$-rail remains a $6$-cycle, and every surviving $a_k$ retains the spoke $a_kb_k$, so $R-U$ is connected. The case $U=\{b_i\}$ is the same with the rails exchanged.

Now let $|U|=2$. If both vertices of $U$ lie on the $a$-rail, the $b$-rail remains a $6$-cycle and every surviving $a$-vertex retains its spoke. If both vertices of $U$ lie on the $b$-rail, the same holds with the rails exchanged. If $U=\{a_i,b_i\}$, each rail minus one vertex is a path on five vertices, and the five remaining spokes join corresponding vertices of these two paths, so $R-U$ is connected. If $U=\{a_i,b_j\}$ with $i\neq j$, each rail minus one vertex is again a path on five vertices, and the remaining spokes are the four spokes at indices other than $i$ and $j$. The unique vertex of the $a$-path with no remaining spoke is $a_j$. The $a$-rail minus $a_i$ is a path of length $4$, so $a_j$ has degree $1$ or $2$ in that path and in particular is not isolated in that path; thus $a_j$ is adjacent in $R-U$ to at least one $a$-vertex which still has its spoke. Dually, $b_i$ is not isolated on the $b$-path. Therefore the two paths together with the four remaining spokes form a connected graph, and $R-U$ is connected.

\noindent\textbf{Step 4.}
No component of $G-U$ can be trapped in a cap interior. Let $U\subseteq V(G)$ with $|U|\le 3$, and fix $t\in\{0,1,2\}$. Write $c_j$ for the vertex $Q_t[j]$. The four matching edges $c_jy_{t,j}$ are pairwise vertex-disjoint (they involve eight distinct vertices). Since $|U|\le 3$, at least one of these four pairs is disjoint from $U$. That matching edge therefore survives in $G-U$, and in particular at least one vertex of $Q_t$ survives.

If $X_t\subseteq U$, then $|U|\ge 3$, hence $U=X_t$. In this subcase every vertex of $Y_t$ and every vertex of $Q_t$ survives. Each $y_{t,j}$ is adjacent to $c_j$, and $Q_t$ is a surviving $4$-cycle, so every surviving vertex of $I_t$ (namely every vertex of $Y_t$) is adjacent to the surviving boundary.

If $X_t$ is not contained in $U$, then some vertex $x_{t,r}$ of $X_t$ survives. The set $Y_t$ has four vertices and $|U|\le 3$, so some vertex of $Y_t$ survives. The surviving vertices of $I_t$ therefore include at least one vertex of $X_t$ and at least one vertex of $Y_t$. Every remaining $x$-vertex is adjacent to every remaining $y$-vertex, so the surviving subset of $I_t$ induces a connected subgraph of $T_t$. Among the four pairwise vertex-disjoint matching edges of $T_t$, at least one pair is disjoint from $U$; that surviving matching edge has its $y$-end in this connected subgraph of $I_t$ and its boundary-end on $Q_t$. Thus every surviving vertex of $I_t$ lies in a component of $G-U$ that already contains a surviving vertex of $Q_t$, and in particular no component of $G-U$ is contained in $I_t$.

\noindent\textbf{Step 5.}
The graph $G-U$ is connected for $|U|\le 3$. Write $U_R:=U\cap V(R)$. By Step~4, every surviving interior vertex of every cap lies in a component of $G-U$ that already contains a surviving vertex of $V(R)$, whenever any interior vertex of that cap survives. It remains to show that all surviving vertices of $V(R)$ lie in a single component of $G-U$. The induced subgraph of $G$ on $V(R)$ equals $R$: every edge of $G$ with both ends in $V(R)$ is an edge of $R$, because the only new edges of the caps are either boundary edges already in $R$ or edges incident to an interior vertex.

If $|U_R|\le 2$, then $R-U_R$ is connected by Step~3. Therefore the surviving vertices of $V(R)$ induce a connected subgraph of $G-U$. Combined with Step~4, the graph $G-U$ is connected.

It remains to treat $|U_R|=3$. Then $U$ is contained in $V(R)$, so no interior vertex of any cap is deleted. The three squares $Q_0,Q_1,Q_2$ partition $V(R)$. Each has four vertices, and $|U_R|=3$, so none of the three squares is entirely deleted. For any two surviving vertices $c,c'$ of a fixed $Q_t$, the path $c$--$y(c)$--$x_{t,0}$--$y(c')$--$c'$ lies in $G-U$, where $y(c)$ denotes the unique matching neighbour of $c$ in $Y_t$: both matching neighbours and $x_{t,0}$ survive because $U$ is contained in $V(R)$. Thus, for each $t$, all surviving vertices of $Q_t$ lie in a single connected subgraph of $G-U$, which we call the block-component of $Q_t$.

The prism supplies three block-links, each consisting of two edges: $Q_0$ to $Q_1$ by the edges $a_1a_2$ and $b_1b_2$; $Q_1$ to $Q_2$ by the edges $a_3a_4$ and $b_3b_4$; $Q_2$ to $Q_0$ by the edges $a_5a_0$ and $b_5b_0$. The two edges of any one link are vertex-disjoint. The three links have pairwise disjoint endpoint-sets, and these three $4$-sets partition $V(R)$. To destroy one link (that is, to remove at least one endpoint from each of its two edges) therefore requires at least two deletions. To destroy two distinct links requires at least four deletions, and those deletions lie in disjoint vertex-sets. Since $|U_R|=3$, at most one of the three links is destroyed. The auxiliary graph whose three vertices are the three blocks and whose three edges are the three links is a $3$-cycle; after deletion of at most one edge it remains connected. Each block-component is nonempty. Therefore all surviving vertices of $V(R)$ lie in a single component of $G-U$. Every interior vertex survives: each $y_{t,j}$ is adjacent to $Q_t[j]$, and each $x_{t,r}$ is adjacent to every $y_{t,j}$, so every interior vertex of $T_t$ is attached to the block-component of $Q_t$. Hence $G-U$ is connected.

Since $|V(G)|=33\ge 5$, the graph $G$ is four-vertex-connected.
\end{proof}

\section{Proof of Theorem~\ref{thm:main}}
\label{sec:proof}

We now prove Theorem~\ref{thm:main}. The remaining pieces are the hereditary collapse of $4$-connected ordinary subgraphs in a $4$-regular host and the three sequential amalgams that carry the prism obstruction onto $G$.

\begin{lemma}[Four-regular hereditary collapse]
\label{lem:collapse}
Let $G$ be a finite simple undirected graph that is connected and $4$-regular. Let $H$ be an ordinary subgraph of $G$ that is four-vertex-connected in the sense of Definition~\ref{defn:ordinary}. Then $H$ equals $G$: $V(H)=V(G)$ and $E(H)=E(G)$.
\end{lemma}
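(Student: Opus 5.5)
The plan is a degree argument followed by connectivity of $G$. First, every vertex $v$ of $H$ has degree at least $4$ in $H$. Otherwise, if $\deg_H(v)\le 3$, delete $U:=N_H(v)$, which has $|U|\le 3$. Since $|V(H)|\ge 5$, the graph $H-U$ contains $v$ together with at least $5-1-3=1$ further vertex. The vertex $v$ is isolated in $H-U$, so $H-U$ is disconnected, contradicting four-vertex-connectivity.

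Second, since $E(H)\subseteq E(G)$, we have $\deg_H(v)\le\deg_G(v)=4$. Hence $\deg_H(v)=4$, and all four $G$-edges at $v$ lie in $H$. In particular every $G$-neighbour of $v$ is an end of an $H$-edge, so it lies in $V(H)$.

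Third, $V(H)$ is nonempty (it has at least five vertices) and, by the second step, is closed under taking $G$-neighbours. Since $G$ is connected, a shortest-path induction from any vertex of $H$ gives $V(H)=V(G)$. Finally, every edge of $G$ has an endpoint $v\in V(H)$, and by the second step all $G$-edges at $v$ are $H$-edges, so $E(H)=E(G)$.

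No step is a real obstacle. The only point needing care is the first step: the count $|V(H)|\ge5$ is exactly what guarantees that deleting $N_H(v)$ leaves a vertex other than $v$, so that an isolated $v$ really witnesses disconnection rather than a one-vertex graph, which counts as connected.
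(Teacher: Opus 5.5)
Your proposal is correct and follows essentially the same route as the paper. Both arguments first force $\deg_H(v)\ge 4$ by deleting $N_H(v)$, using $|V(H)|\ge 5$ to leave a second vertex. Both then use $4$-regularity to get $\deg_H(v)=4$, so $V(H)$ is closed under $G$-neighbours, and conclude $V(H)=V(G)$ and $E(H)=E(G)$ from connectivity of $G$.
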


\begin{proof}
Let $v$ be an arbitrary vertex of $H$, let $d$ be the degree of $v$ in $H$, and let $N$ be the set of neighbours of $v$ in $H$, so $|N|=d$. Suppose $d\le 3$. Set $U:=N$, so $|U|\le 3$. The vertex $v$ belongs to $H-U$. Every neighbour of $v$ in $H$ lies in $U$, so $v$ is isolated in $H-U$. If there exists a vertex $w$ in $V(H)$ that does not lie in $U\cup\{v\}$, then $H-U$ contains at least the two vertices $v$ and $w$, and $v$ is a connected component of $H-U$ distinct from the component containing $w$, contradicting that $H-U$ is connected. Therefore $V(H)=U\cup\{v\}$, whence $|V(H)|=d+1\le 4$. This contradicts $|V(H)|\ge 5$. Hence $d\ge 4$. Since $v$ was arbitrary, every vertex of $H$ has degree at least $4$ in $H$.

Let $v$ be an arbitrary vertex of $H$. Because $G$ is $4$-regular, the degree of $v$ in $G$ is exactly $4$. Every edge of $H$ incident to $v$ is an edge of $G$ incident to $v$, so the degree of $v$ in $H$ is at most $4$. Combined with the previous paragraph, the degree of $v$ in $H$ is exactly $4$. Therefore the four $G$-edges incident to $v$ all belong to $H$, and every $G$-neighbour of $v$ belongs to $V(H)$.

Consequently there is no edge of $G$ with one end in $V(H)$ and the other end in $V(G)\setminus V(H)$. Hence $V(H)$ is a union of connected components of $G$. The set $V(H)$ is nonempty because $|V(H)|\ge 5$. The graph $G$ is connected by hypothesis. A nonempty union of connected components of a connected graph must be the full vertex-set, so $V(H)=V(G)$. Every edge of $G$ is incident to a vertex of $H$ and therefore belongs to $H$, so $E(H)=E(G)$. Thus $H=G$.
\end{proof}

\begin{remark}
\label{rem:collapse-scope}
Lemma~\ref{lem:collapse} uses the convention $|V|\ge 5$ from Definition~\ref{defn:ordinary}, equivalently the numbering $|W|>4$ of~\cite[Definition~6.3]{JNW21} for integer cardinalities. The literal deletion condition ``$H-U$ connected for every $|U|\le 3$'' without $|V|\ge 5$ would count $K_4$ as four-vertex-connected, and $K_4$ is a proper ordinary subgraph of $K_5$; that degeneracy does not arise for $4$-regular graphs on fewer than $5$ vertices (a $4$-regular graph has at least $5$ vertices by the handshaking lemma) and does not arise for triangle-free subgraphs on $4$ vertices (the only $4$-vertex graph meeting the literal deletion condition is $K_4$, which contains triangles).
\end{remark}

\begin{proof}[Proof of Theorem~\ref{thm:main}]
Let $G$ be the graph of Definition~\ref{defn:G}. Proposition~\ref{prop:host} supplies that $G$ is a simple graph on $33$ vertices with $66$ edges, that $G$ is $4$-regular, triangle-free, of girth four, and four-vertex-connected, that $\kappa_2(G)=1$, and that each pair $(R,T_t)$ is an exact $4$-cycle amalgam interface.

\noindent\textbf{Step 1.}
The seed prism admits no legal system by Corollary~\ref{cor:negative-curvature} and the computation in Section~\ref{sec:prelim}.

\noindent\textbf{Step 2.}
The host is the third of three exact induced $4$-cycle amalgams. Write $G_0:=R$. Let $G_1$ be the amalgam $G_0\cup_{Q_0}T_0$ in the sense of Definition~\ref{defn:amalgam}: the vertex-set is $V(G_0)\cup V(T_0)$, the edge-set is $E(G_0)\cup E(T_0)$, the two vertex-sets meet exactly at $V(Q_0)$, and the two edge-sets meet exactly at $E(Q_0)$. Let $G_2$ be the amalgam $G_1\cup_{Q_1}T_1$ in the same sense. Let $G_3$ be the amalgam $G_2\cup_{Q_2}T_2$ in the same sense.

The vertex-set of $G_3$ is $V(R)\cup I_0\cup I_1\cup I_2$, which is $V(G)$. The edge-set of $G_3$ is $E(R)\cup E(T_0)\cup E(T_1)\cup E(T_2)$. The four boundary edges of each $T_t$ already lie in $E(R)$, so this union is exactly the sixty-six-edge set of Proposition~\ref{prop:host}. Thus $G_3=G$.

It remains to check that each of the three unions satisfies Definition~\ref{defn:amalgam}. First consider $G_1=G_0\cup_{Q_0}T_0$, with $\Gamma'=R$, $\Gamma''=T_0$, and $C=Q_0$. By the exact-interface clause of Proposition~\ref{prop:host} at $t=0$, one has $V(R)\cap V(T_0)=V(Q_0)$ and $E(R)\cap E(T_0)=E(Q_0)$. The same clause records that $Q_0$ is induced in $G$. The graph $G_1$ is the subgraph of $G$ on $V(R)\cup I_0$: every edge of $G$ with both ends in $V(R)\cup I_0$ is an edge of $R$ or of $T_0$, because the only edges of $G$ incident to $I_1$ (respectively $I_2$) have their other end in $V(T_1)$ (respectively $V(T_2)$), by the no-cross-edge clause of Proposition~\ref{prop:host}, and those other ends do not both lie in $V(R)\cup I_0$ unless the edge already belongs to $R$ or $T_0$. In particular the edges of $G_1$ among $V(Q_0)$ are among the edges of $G$ among $V(Q_0)$, which are exactly the four cycle edges of $Q_0$. Thus $Q_0$ is induced in $G_1$. There is no edge of $G_1$ from $I_0=V(T_0)\setminus V(Q_0)$ to $V(R)\setminus V(Q_0)$, because any such edge would be an edge of $G$ from $I_0$ to $V(G)\setminus V(T_0)$, contradicting the no-cross-edge clause at $t=0$. Therefore $G_1$ is an amalgam along the induced $4$-cycle $Q_0$ in the sense of Definition~\ref{defn:amalgam}.

Next consider $G_2=G_1\cup_{Q_1}T_1$, with $\Gamma'=G_1$, $\Gamma''=T_1$, and $C=Q_1$. The vertex-set of $G_1$ is $V(R)\cup I_0$, and the vertex-set of $T_1$ is $V(Q_1)\cup I_1$. These interiors $I_0$ and $I_1$ are disjoint from each other and from $V(R)$ by Proposition~\ref{prop:host}, and $V(Q_1)\subseteq V(R)$, so $\bigl(V(R)\cup I_0\bigr)\cap\bigl(V(Q_1)\cup I_1\bigr)=V(Q_1)$. The edges of $T_1$ other than the four cycle edges of $Q_1$ are incident to $I_1$, hence do not lie in $E(G_1)$. The edges of $G_1$ among $V(Q_1)$ are the edges of $R$ among $V(Q_1)$, because no edge of $T_0$ is incident to $V(Q_1)$: $V(T_0)=V(Q_0)\cup I_0$ is disjoint from $V(Q_1)$. The edges of $R$ among $V(Q_1)$ are exactly the four cycle edges of $Q_1$, by the exact-interface clause of Proposition~\ref{prop:host} at $t=1$. Thus $E(G_1)\cap E(T_1)=E(Q_1)$. The $4$-cycle $Q_1$ is induced in $G$ by that same clause, and the edges of $G_2$ among $V(Q_1)$ are among the edges of $G$ among $V(Q_1)$, so $Q_1$ is induced in $G_2$. There is no edge of $G_2$ from $I_1=V(T_1)\setminus V(Q_1)$ to $V(G_1)\setminus V(Q_1)$: any such edge would be an edge of $G$ from $I_1$ to $V(G)\setminus V(T_1)$, since $V(G_1)\setminus V(Q_1)=\bigl(V(R)\setminus V(Q_1)\bigr)\cup I_0$ is contained in $V(G)\setminus V(T_1)$, contradicting the no-cross-edge clause at $t=1$. Therefore $G_2$ is an amalgam along the induced $4$-cycle $Q_1$.

Finally consider $G_3=G_2\cup_{Q_2}T_2$, with $\Gamma'=G_2$, $\Gamma''=T_2$, and $C=Q_2$. The vertex-set of $G_2$ is $V(R)\cup I_0\cup I_1$, and the vertex-set of $T_2$ is $V(Q_2)\cup I_2$. The three interiors are pairwise disjoint and disjoint from $V(R)$, and $V(Q_2)\subseteq V(R)$, so $\bigl(V(R)\cup I_0\cup I_1\bigr)\cap\bigl(V(Q_2)\cup I_2\bigr)=V(Q_2)$. The edges of $T_2$ other than the four cycle edges of $Q_2$ are incident to $I_2$, hence do not lie in $E(G_2)$. The edges of $G_2$ among $V(Q_2)$ are the edges of $R$ among $V(Q_2)$, because no edge of $T_0$ or of $T_1$ is incident to $V(Q_2)$. The edges of $R$ among $V(Q_2)$ are exactly the four cycle edges of $Q_2$, by the exact-interface clause at $t=2$. Thus $E(G_2)\cap E(T_2)=E(Q_2)$. The $4$-cycle $Q_2$ is induced in $G$, hence induced in $G_3$. There is no edge of $G_3$ from $I_2$ to $V(G_2)\setminus V(Q_2)$, because $V(G_2)\setminus V(Q_2)$ is contained in $V(G)\setminus V(T_2)$ and the no-cross-edge clause at $t=2$ forbids any edge from $I_2$ to $V(G)\setminus V(T_2)$. Therefore $G_3$ is an amalgam along the induced $4$-cycle $Q_2$.

\noindent\textbf{Step 3.}
The host admits no legal system. Apply Corollary~\ref{cor:amalgam-obstruction} successively to the three amalgams. By Step~1, $G_0=R$ admits no legal system. The first amalgam of Step~2 therefore yields that $G_1$ admits no legal system. The second amalgam yields that $G_2$ admits no legal system. The third amalgam yields that $G_3$ admits no legal system. By Step~2 one has $G_3=G$, so $G$ admits no legal system.

\noindent\textbf{Step 4.}
It remains to verify the hereditary assertion. Proposition~\ref{prop:host} gives that $G$ is $4$-regular, triangle-free, of girth four, and four-vertex-connected, with $\kappa_2(G)=1$. By triangle-freeness and Definition~\ref{defn:kappa}, $\kappa(G)=1$. By Step~3, $G$ admits no legal system. Finally, Lemma~\ref{lem:collapse} shows that every four-vertex-connected ordinary subgraph of $G$ equals $G$; hence no such subgraph admits a legal system. This proves all assertions of Theorem~\ref{thm:main}.
\end{proof}

\section{Conclusion}
\label{sec:conclusion}

We construct a $33$-vertex, $4$-regular, four-vertex-connected graph of girth four and Charney--Davis curvature one such that no four-vertex-connected ordinary subgraph admits a legal system. This resolves the $4$-connected problem stated by Jankiewicz, Norin, and Wise~\cite[Problem~5.2]{JNW21}. The proof combines an induced-$4$-cycle amalgam restriction, which preserves the obstruction to legal systems, with a $4$-regular host construction whose connectivity and curvature meet the required global conditions. In particular, the example shows that nonnegative Charney--Davis curvature together with $4$-connectivity does not force the existence of a legal system, even after passing to $4$-connected ordinary subgraphs.

\ifdefined\JNWCONTENTONLY
\else
\end{document}
\fi

\end{document}